\documentclass{article}
\usepackage{graphicx} 
\setlength{\parskip}{1 em}
\usepackage{amsmath}
\usepackage{tikz}
\usepackage[font={small,it}]{caption}
\usepackage{bbold, mathtools, amsthm}
\usepackage[colorlinks,hypertexnames=true,citecolor=blue]{hyperref} 
\usepackage{pgfplots}
\pgfplotsset{width=5.5cm,compat=1.9}
\usepackage[authoryear,longnamesfirst]{natbib}
\usepackage{hyperref}
\usepackage{appendix}
\bibliographystyle{agsm}
\setcitestyle{authoryear,open={(},close={)}}
\providecommand{\keywords}[1]{\textbf{\textit{Keywords: }} #1}
\usepgfplotslibrary{fillbetween}
\usepackage{booktabs}

\newtheorem{theorem}{Theorem}

\DeclareMathOperator{\Lagr}{\mathcal{L}}

\def\red{\color{black}}

\title{The Theory of Storage in a Power System with Stochastic Demand}
\author{Darryl Biggar and Mohammad Reza Hesamzadeh\footnote{Associate Professor, Monash University (Australia) and Professor at KTH Royal Institute of Technology (Sweden), respectively.}}

\begin{document}

\maketitle

\begin{abstract}
    \noindent Electric power systems are increasingly turning to energy storage systems to balance supply and demand. {\red But how much storage is required?} What is the optimal volume of storage in a power system {\red and on what does it depend? In addition,} what form of hedge contracts do storage facilities require? We answer these questions in the special case in which the uncertainty in the power system involves successive draws of an independent, identically-distributed random variable. We characterize the conditions for the optimal operation of, {\red and} investment in, storage and show how these conditions can be understood graphically using price-duration curves. We also characterize the optimal hedge contracts for storage units.

\end{abstract}

\keywords{Energy storage; screening curves; merchant storage investment; hedges for storage, storage expansion planning}

\section{Introduction}

With the increased penetration of Variable Renewable Energy (VRE) generation in wholesale electricity markets, there will likely be an increased role for energy storage systems. But this raises a number of key questions for power system analysts and policymakers:
\begin{itemize}
    \item How should a storage service be used? What is the efficient dispatch of a storage system? What are the private incentives for operation of storage? Do they match the overall social optimum?
    \item What is the optimal mix of storage in a power system? {\red What is the optimal amount of storage and how does it vary with, say, the level of uncertainty in the supply-demand balance?} What are the private incentives for investment in storage? Do they match the overall social optimum?
    \item What sort of hedge contracts do storage services require? What sort of hedge contracts should policymakers make available?
\end{itemize}

There is a very substantial literature exploring the implications of storage in electric power systems. But, due to the complexity of describing storage systems, and the complexity of the environment in which they operate, simple concrete and intuitive answers to these questions have yet to emerge from this literature. This article seeks to make progress by focusing on a storage system in a stylized electricity market, as described below.

In the case of conventional, controllable generation (that is, putting aside storage), there are relatively straightforward and well-known answers to these questions. For example, under certain assumptions, screening curve analysis\footnote{See, for example, the textbooks \cite{Stoft1999}, \cite{biggarhesamzadeh2014} and \cite{tanaka2024economics}} provides a simple and intuitive guide as to the optimal mix of different types of generation (baseload, mid-merit, and peaking) in an optimally-configured power system. Can we extend this screening curve analysis to include storage?

It is well known that risk-averse electricity market participants routinely seek to mitigate their risk with hedge (or forward market) contracts. The type of hedge contract required to eliminate the risk faced by generators varies with the type of generation, with baseload, peaking and wind generators all (in principle) requiring different types of hedge products. But what sort of hedge contracts do storage systems require? What hedge products might eliminate the risk of an energy storage system? This question has not been addressed in the literature.

The role of a storage system in a wholesale power market is one of \textit{inter-temporal arbitrage}.\footnote{Many storage systems also provide other ancillary services, such as frequency control. We will put these other services to one side.} That is, the storage system should charge when the spot price is low and discharge when the price is high. The future path of wholesale prices is therefore critical. In this article we will focus on the special case in which, in each dispatch interval, the realisation of the spot price (or net demand) is independent and identically distributed. In other words, a realisation of the net demand in one period tells us nothing about the realisation of net demand in the next period. This is a special case, but it is a useful starting point. {\red As we will see, even with this simplifying assumption, the classical, traditional results in power system analysis are changed in important ways. The presence of storage, it turns out, requires us to rethink some basic results in power system theory.}

This article has four main sections. Section \ref{sec:litrev} sets out a brief review of the literature. In section \ref{sec:review} we {\red start by summarizing some} basic results of conventional electricity market theory. {\red This provides a benchmark with which we compare} the corresponding results for storage services {\red later in the paper}. We will see that although {\red some principles and intuition from basic electricity market theory} remain, storage introduces new complexities which must be taken into account. Section \ref{sec:private} considers the private incentives for use of and investment in a storage service. We identify rules for the optimal private use of storage as well as rules for the optimal private investment in storage. We also identify a hedge contract which perfectly insulates the storage service from risk and discuss how that hedge contract might be provided in the market. Section \ref{sec:mix} takes the social-planner perspective and explores the question of the optimal dispatch of, and investment in, storage. This section seeks to answer the question: What is the optimal mix of storage in a power system? Section \ref{sec:examples} compares the theory set out here to the operation of large storage services in wholesale power markets in practice. Section \ref{sec:conclusion} concludes.

\section{Literature review and key contributions}\label{sec:litrev}

The engineering literature on the integration of storage into power systems is vast and is well surveyed in \cite{miletic2020operating} and \cite{sioshansi2021energy}. This literature typically uses optimization methods to find the optimal usage of different types of storage in a range of circumstances. Unfortunately, this engineering approach -- while valuable -- tends to be a `black box', yielding relatively little intuition or insight into the role of storage {\red outside the specific context considered in each paper}. 

From an economic perspective, the primary -- and most generic -- role of storage in a power system is as a form of \textit{arbitrage} -- purchasing from the market (i.e., charging) during periods of low prices and selling to the market (i.e., discharging) at times of high prices. Storage also may have a role to play in providing short-term balancing or frequency control services\footnote{The need for such balancing services arises from the fact that the typical length of the spot market dispatch interval is longer than the timescale on which the power system must be kept in balance. Reducing the length of the dispatch interval reduces the need for such balancing services. We will put aside the role of storage in providing very-short-term frequency control or balancing services.}, providing capacity in systems which directly fund capacity, in mitigating ramping constraints\footnote{E.g., \cite{shahmohammadi2018role}}, or in deferring network investment. Each of these additional roles for storage arises due to the absence of price signals, and will become less important as {\red the temporal and spatial granularity of} price signals {\red is improved} over time. This leaves the primary role for storage as a form of arbitrage.

 The simplest case to deal with is the case where the price follows a deterministic (or perfectly predictable) path. In this case, the optimal operation of the storage facility depends on whether it faces constraints on the \textit{rate} at which it can charge or discharge or constraints on the total \textit{volume} of energy that can be stored.

In the case where there are only binding constraints on the \textit{rate} (MW) of charge and discharge, in a world of perfectly predictable prices, the optimal use of a price-taking storage system is a simple binary strategy: When the spot price exceeds an upper threshold, the storage system discharges at the maximum rate; when the spot price falls below a lower threshold, the storage system charges at the maximum rate \citep{graves1999opportunities}. The difference between the two prices reflects the `round-trip' efficiency of the storage system (i.e., the energy lost in a charge-discharge cycle). The level of these price thresholds, however, depends on the overall price profile.\footnote{In the case where the storage system is perfectly efficient so that the buy threshold and sell threshold are the same, both prices are equal to the average price in the price profile.}  

In the case where there are no rate limits (and only binding constraints on the total \textit{volume} of energy that can be stored) the optimal use of the storage system is quite different. In this case the optimal strategy is to charge to the maximum volume at any \textit{local minimum} in the price profile and to discharge to zero at any \textit{local maximum} in the price profile. Of course, this requires knowledge of the future prices to identify local maxima and minima. Nevertheless, a number of articles use this `perfect foresight' assumption to estimate the arbitrage value of energy storage \citep{mcconnell2015estimating, das2015assessing, sioshansi2009estimating, walawalkar2007economics, figueiredo2006economics, hu2010optimal, lamp2022large}.

In practice, future spot prices are not perfectly predictable. If we allow for uncertainty in future outcomes, stochastic dynamic programming techniques must be used (e.g., \citep{van2013optimal, sioshansi2013dynamic, xi2014stochastic, mokrian2006stochastic, lohndorf2010optimal}). However, these quickly become complex and difficult to solve, necessitating quite strong simplifying assumptions. In the analysis that follows we make a key simplifying assumption about the nature of future spot prices. Specifically, we assume that all draws of the spot price are independent and identically distributed.

Most of the literature on storage assumes price-taking behaviour. An important strand of the literature on storage relaxes this assumption to consider strategic behaviour by storage \citep[See][]{miletic2020operating,andres2023storing, siddiqui2019merchant, karaduman2020economics,balakinroger2025}. Finally, we note that another strand of the literature focuses on identifying the optimal level of investment in storage, sometimes known as the Storage Expansion Plan. See for example, the surveys by \cite{haas2017challenges} and \cite{sheibani2018energy}.

Despite the size of the literature on energy storage, we consider that relatively few {\red general} insights or intuition have emerged. In part, this is due to the fact that storage systems differ widely in both their maximum rate of production and their maximum volume of stored energy (as well as other factors such as the round-trip efficiency, rate of deterioration etc.). In addition, storage systems operate in a complex, uncertain and changing environment. In this article we aim to develop concrete insights and intuition about the optimal use of, and investment in, storage in certain special cases. Specifically, drawing inspiration from screening curves, we focus on the case of a storage system that is not rate-limited and a power system in which there is no time-evolution of prices.

Risk-averse wholesale electricity market participants routinely seek to be hedged -- that is, to offset their cashflow risks through financial hedge products. Whereas a range of financial hedge products are commonly available to different types of generation, to our knowledge the question of hedges for energy storage products has not been addressed in the literature. We show that in the special case considered in this article the optimal operation of the storage system can be characterised. We draw on this result to identify a set of hedge contracts which perfectly hedge the risks of storage systems.

Finally, we characterise the optimal level of investment in storage capacity (MWh) in this context, and show how this capacity relates to familiar ideas of screening curves.

{\red The key assumption at the base of the analysis that follows is that there is only one source of uncertainty in the power system (here, represented by variation in load) and that uncertainty is independently and identically distributed across intervals. This is a strong assumption. It implies that the realisation of load in one period provides no information about the realisation of load in the next. In practice, load is not iid over time. Rather, load follows regular daily, weekly and seasonal cycles, with a degree of uncertainty around those underlying cycles due to factors such as the weather, and planned and unplanned outages.

We justify this key assumption on the basis that: (a) we seek to develop understanding and intuition in the operation of, and investment in, storage and this is easiest to achieve in clear special cases; (b) this assumption is consistent with common assumptions in power system theory (i.e., the theory underlying screening curves and many other familiar results); and (c) the operation of storage rapidly becomes significantly more complex when the load follows a more complex stochastic process. The results set out here represent an important benchmark against which we can compare storage operation rules under more complex stochastic processes. At the same time, the results set out here represent a stepping stone showing how familiar results in power system theory (e.g., screening curves) extend to the case of storage under strong assumptions.

We make two further simplifying assumptions -- the assumptions (a) that the storage faces no efficiency losses and no degradation, and (b) that the storage is not rate limited (that is, it can charge or discharge as much as desired in the course of a single dispatch interval). These assumptions are made for convenience. As we will see, there are important lessons even in the simplified case studied here. The implications of these assumptions can be explored in future research. }

The key contributions of this article are as follows:
\begin{itemize}
    \item We characterize the efficient operation of a non-rate limited storage system, in a simplified power system and provide a simple and intuitive characterisation of the private incentive for investment in storage.
    \item We identify a financial hedge contract which can provide a perfect hedge for storage and show how this hedge can be constructed out of a combination of caps and floors.
    \item We characterize the socially-efficient level of investment in storage and provide a simple graphical interpretation.
\end{itemize}

\section{Review of basic electricity market theory}\label{sec:review}

This section summarises {\red well-known, fundamental results in the theory of electricity markets. The purpose is to provide a benchmark or point of departure for the key results on storage that follow. This section} introduces the notation and the flow of logic in a familiar setting before {\red later sections} extend the results to the case of storage. Table \ref{tab:nomenclature} summarises the notation used in this article.

\subsection{Social incentives for usage of, investment in, and hedging of generation}\label{sec:privgenonly}


Let's suppose that we have a range of generation technologies, labelled $i=1, \ldots, N$. Each generation technology $i$ has a constant variable cost of production $c_i$ (\$/MWh) and a constant fixed cost of adding capacity $f_i$ (\$/MWh).

As is conventional, uncertainty in the power system is introduced in the form of an uncertain, inelastic load. The realisation of the load each interval is given by the random variable $L$. This random variable is assumed to be independent and identically distributed each interval. The demand curve is equal to some large, fixed value $V^{LL}$ (the value of lost load), up to the rate of consumption $L$, at which point the demand drops to zero. We can model this as a generator with zero fixed cost and a variable cost equal to $V^{LL}$.

We can define the \textit{state} of the power system to be all the factors that affect supply and demand (and therefore affect the price). In the model we are using here, the only source of uncertainty (and therefore the only variable defining the state of the power system) is the realisation of the load $L$.

In conventional power system modelling, the state of the power system in each time interval is (implicitly) assumed to be independent and identically distributed from one time interval to the next. As a consequence, the realisation of the price in each interval is independent and identically distributed -- in other words, each realisation of the price is a drawing from the same distribution. (As we will see shortly, this assumption will not hold in the presence of storage).


Let's suppose that the capacity of generation type $i$ is given by $K_i$ (MW). Given this mix of generation, each interval the task of the social planner is to observe the load $L$ {\red (MW)} and to choose the dispatch of the available generation to minimise the total cost. Let $DC(L|\vec{K})$ {\red (\$/h)} denote the total cost of dispatch\footnote{Strictly: The rate at which expenditure is incurred.} when the power system is operated efficiently, the total load is $L$ and the mix of capacities is given by $\vec{K}=\{K_1, K_2, \ldots, K_N \}$. The efficient dispatch of each generation type $Q_{iL}$ (MW) is the solution to:
\begin{align}
    \forall L,\vec{K}, \; DC(L|\vec{K}) &\equiv \min_{Q_{iL}} \sum_i c_i Q_{iL}  \text{ subject to: }\nonumber\\
    &\sum_i Q_{iL}=L\nonumber\\
    \forall i,\; &0 \le Q_{iL} \le K_i\label{eqn:dclk}
\end{align}
The solution to this problem is the familiar merit-order dispatch: Generators are ranked in order from the cheapest (lowest variable cost $c_i$) to the most expensive (highest variable cost $c_i$), with the lower-variable costs generators used to capacity before the higher variable cost generators are used at all. For each realisation of the load $L$, there is a corresponding `system marginal cost' (equal to the variable cost of the most expensive generator that is dispatched), which can be identified with the wholesale spot price $P$ {\red (\$/MWh)}. Specifically, we can define the `price' to be the change in dispatch cost for a small change in the load:
\begin{equation}
    \forall L,\vec{K}, \; P(L|\vec{K}) \equiv \frac{\partial DC}{\partial L}(L|\vec{K})\label{eqn:plk}
\end{equation}
Then it follows that the optimal dispatch of each generation technology $i$, $Q^*_{iL}$ satisfies:
\begin{equation}
    \forall i, L, \; Q^*_{iL} = \begin{cases}
        0, &\text{if } P(L|\vec{K})<c_i\\
        K_i, & \text{if } P(L|\vec{K})>c_i\\
        \text{in the range }[0,K_i], &\text{if } P(L|\vec{K})=c_i
    \end{cases}\label{eqn:sod}
\end{equation}

We will assume that the capacity of each generator must be chosen before the load is realised. The optimal mix of each generation type is the choice of $\vec{K}$ which minimises the total cost of the power system {\red (denoted $V(\vec{K})$)} which we can write as:
\begin{equation}
    \forall \vec{K}, \; V(\vec{K}) \equiv E[DC(L|\vec{K})]+\sum_{i} f_i K_i\label{eqn:vf1}
\end{equation}
Here the expectation is over all possible realisations of the load $L$.

Solving this problem (minimising over $K_i$ for each $i$), we find the familiar condition that, in the optimal mix of generation, for each generation type $i$, the area under the price-duration curve and above the variable cost of production $c_i$ is equal to the fixed cost of capacity $f_i$. In other words, for each technology $i$, the optimal capacity $K_i$ of that technology is where the area under the price-duration curve is equal to the fixed cost of that technology $f_i$:
\begin{align}
    \forall i, \; \frac{\partial V}{\partial K_i} &= E[\frac{\partial DC}{\partial K_i} (L|\vec{K})]+ f_i\nonumber\\
    &=  -E[ (P(L)-c_i) I(P(L)\ge c_i)] + f_i = 0\label{eqn:soi}
\end{align}
Here $I(\cdot)$ is the indicator function which takes the value one when the expression in brackets is true and zero otherwise. {\red $E[ (P(L)-c_i) I(P(L)\ge c_i)]$ is the area under the price-duration curve and above the variable cost of the generator.}

This leads, in a straightforward manner, to the concept of `screening curves' and familiar results such as (a) in an optimally-configured power system, the shape of the price-duration curve is determined entirely by cost structure of the generation types (and not by load characteristics); and (b) the average price (the area under the price-duration curve) is entirely determined by the fixed and variable cost of the cheapest generation.\footnote{See, for example, the textbook treatment in \cite{biggarhesamzadeh2014}.}

\subsection{Private incentives for usage of, and investment in, generation}\label{sec:social}

Now let's consider the \textit{private} incentives to operate, or to invest in, generation.

Let's suppose we have a private, for-profit, controllable generator with a maximum rate of production $K$ (MW) (said to be the `capacity' of the generator).  {\red The variable cost of the generator is $c$ (\$/MWh) and the fixed cost per unit of capacity is $f$ (\$/MWh).} Each period the generator observes the spot price $P$ (\$/MWh) and makes a decision as to how much to produce. If the generator produces at rate $Q$, it receives a cashflow stream at the rate $\pi$ (\$/h) where $\pi$ is given by:\footnote{It follows that each dispatch interval the generator receives revenue $\pi \Delta$ where $\Delta$ is the length of the dispatch interval, discussed further below.}
\begin{equation}
    \pi(Q|P) = PQ - cQ\label{eqn:1}
\end{equation}


Let $V(P|K)$ be the (one-period) cashflow stream of the generator (before fixed costs) when it is operated efficiently (i.e., profit maximising) and the spot price is $P$. Given the spot price $P$, and capacity $K$, the generator chooses a profit-maximising rate of production $Q_P$:
\begin{equation}
    \forall P,K, \; V(P|K) \equiv \max_{Q_P} \pi(Q_P|P) \text{ subject to: } 0 \le Q_P \le K\label{eqn:osc}
\end{equation}
This has the familiar solution:
\begin{equation}
    Q^*_P = \begin{cases}
        0, & \text{if }P<c\\
        K, & \text{if }P > c\\
        \text{in the range }[0,K], &\text{if } P=c\\
    \end{cases}\label{eqn:pod}
\end{equation}

When it comes to investment, we will make the conventional assumption that the capacity of the generator $K$ must be chosen before the spot price is realised. The total expected payoff to the generator with capacity $K$ (after including fixed costs) is $E[V(P|K)]-fK$ where the expectation is taken over all possible values of the spot price $P$ (recall that this distribution is stationary).

The generator owner has an incentive to add capacity (or new generation capacity has an incentive to enter the market) up to the point where $E[\frac{\partial V}{\partial K}(P|K)]=f$. Drawing on equations \ref{eqn:1}, \ref{eqn:osc} and \ref{eqn:pod}, this is equivalent to the condition that the area under the price-duration curve and above the variable cost of the generator $c$ is equal to the fixed cost of the generator $f$:
\begin{equation}
    E[ (P-c)I(P \ge c)] = f\label{eqn:poi}
\end{equation}

It follows that, in the free-entry equilibrium, for every generation technology, the area under the price-duration curve and above the variable cost of that technology must be equal to the fixed cost of that technology.


We can observe that the socially-optimal dispatch and investment decisions (equations \ref{eqn:sod} and \ref{eqn:soi}) are identical for the private profit-maximising dispatch and investment decisions (equations \ref{eqn:pod} and \ref{eqn:poi}), so a competitive market of profit-maximising agents will achieve the socially-optimal outcome.

\subsection{Perfect hedges for generators}

Let's turn now to the question of an `optimal' hedge contract for this generator. As the spot price is stochastic, the generator above experiences a variable cashflow given by $\pi(Q^*_P|P)$. If this generator is risk averse it would like to reduce or eliminate this risk. We will seek a `perfect' hedge contract that completely eliminates the risk faced by the generator.\footnote{\red In practice, depending on the supply and demand for insurance products, there is likely to be some premium associated with a perfect insurance product. As a result, market participants, even if risk averse may not choose to be perfectly hedged. Nevertheless, the case of a perfect hedge provides a useful benchmark and guide as to how to design hedging instruments. In effect, we are assuming that there exist risk neutral entities (which might be financial institutions) who are willing to take on risk with little need for compensation and who therefore can meet the hedging needs of other market participants at relatively low cost.\label{fn:hedge}}

A hedge contract is a financial payment from the generator to another party, which depends on the spot price, and potentially on the dispatch of the generator, $H(Q, P)$. {\red If the spot price is $P$ and the generator chooses to produce at rate $Q$, the hedged cashflow of the generator is equal to:
\begin{equation}
    \pi(Q|P)-H(Q,P)
\end{equation}}

A perfect hedge contract is a hedge contract $H(Q,P)$ which has two key characteristics:
\begin{enumerate}
    \item First, the hedge contract must eliminate the variability in the spot price when the generator is operating efficiently:
        \begin{equation}
            \pi(Q^*_P|P)-H(Q^*_P,P)=\text{constant}\label{eqn:perfhc}
        \end{equation}
    \item Second, the hedge contract must not affect or distort the efficient production decision of the generator. In other words,the optimal supply curve $Q^*_P$ which is a solution to equation \ref{eqn:osc} (as given in equation \ref{eqn:pod}) must also be a solution to the task of maximising the hedged profit of the generator:
        \begin{equation}
            V^H(P|K) \equiv \max_{Q_P} \pi(Q_P|P)-H(Q_P,P) \text{ subject to: } \forall 0 \le Q_P \le K\label{eqn:osch}
        \end{equation}
\end{enumerate}
Because, by definition, the optimal rate of production $Q^*_P$ maximises the cashflow $\pi(Q|P)$ (subject to the constraints), it follows that the second condition can be satisfied simply by making the hedge contract independent of the actual or out-turn production of the generator. In other words, a sufficient condition for the second criterion is as follows:
\begin{equation}
    \forall P, \; \frac{\partial H}{\partial Q}(Q,P)=0
\end{equation}
This is an important condition which is satisfied by most conventionally-trade hedge contracts (such as swaps, caps, and floors).\footnote{This condition is not satisfied by some common hedge contracts for wind generators, and this has been heavily criticised in by several commentators including \cite{newbery2022}.}

In the case above, where the generator has a constant variable cost, the cashflow of the generator is equal to:
\begin{equation}
    \pi(Q^*_P|P)=(P-c)Q^*_P = \begin{cases}
        (P-c)K, & \text{if } P>c\\
        0, &\text{otherwise}
    \end{cases}
\end{equation}
This cashflow can be perfectly hedged by a Cap contract with a strike price $c$ and a volume $K$. A Cap contract  with a strike price $c$ and a volume $K$ has the following payoff.
\begin{equation}
    Cap(P|c,K) \equiv (P-c) K I(P\ge c)
\end{equation}
A cap contract with a strike price $c$ and a volume $K$ is a perfect hedge for this generator:
\begin{equation}
    \pi(Q^*_P|P)=Cap(P|c,K)
\end{equation}

More generally, for any generator with an upward sloping convex cost function $c(Q)$ \cite{biggar2022integrated} show that it is possible to come arbitrarily close to a perfect hedge using a portfolio of cap contracts {\red with strike prices that are arbitrarily close together}. Similarly, \cite{biggarhesamzadeh2014} show how loads can be perfectly hedged using a portfolio of floor contracts.

\begin{table}[h!]
    \centering
    \caption{Nomenclature used in this article}
    \begin{tabular}{p{2.5 cm} p{2 cm} p{5 cm}}
      \toprule
      Notation  & Unit & Explanation \\
      \midrule
      $\Delta$ & hours & length of dispatch interval \\
      $c$,$c_i$ & \$/MWh & variable cost of generating plant\\
      $f_i$,$f_S$ & \$/MWh & fixed cost of generation technology $i$, storage\\
      $L$ & MW & Random variable representing uncertain load\\
      $L^{max}, L^{min}$ & MW & Upper and lower bound on the load\\
      $S$ & MWh & State of charge of the system-wide storage assets (which evolves according to a Markov process)\\
      $S^+(S,L), S^+_{SL}$ & MWh & New state of charge given old state of charge and the realisation of load\\
      $S^{max}, S^{min}$ & MWh & Upper and lower bound on the state of charge\\
      $M(S,T), M_{ST}$ & Prob & Matrix describing the probability of transition from state of charge $S$ to state of charge $T$.\\
      $P(L), P_L$ & \$/MWh & spot price in current dispatch interval (which, in the simplest case is a function of the load $L$)\\
      $P(S,L), P_{SL}$ & \$/MWh & Spot price when the overall state of storage is $S$ and the load is $L$\\
      $Q$, $Q_i$ & MW & rate of production of generating plant\\
      $K$ & MW & maximum rate of production (capacity) of generating plant\\
      $x(s)$ & Prob & Probability the system is in state $s$ in the stationary distribution\\
      $\delta$ & & Rate of time discount per interval\\
      $DC(L|\vec{K})$ & \$/h & Total dispatch cost when meeting total load of $L$ in a context in which the generation capacities are $\vec{K}$\\
      \bottomrule
    \end{tabular}
    \label{tab:nomenclature}
\end{table}

\section{Optimal social use of and investment in storage in a power system}\label{sec:mix}

Let's now extend this basic power system theory to incorporate storage. The introduction of storage in the power system has one important consequence for how we model the power system -- we can no longer model the power system as being in the same state each period. Now, the level of storage in the power system affects the balance of supply and demand and therefore affects outcomes.

Extending the analysis above, we will consider a power system in which the state of the power system is determined entirely by the state of charge of the storage $S$. 

To keep things simple, We will consider a stylized storage system which is not limited in the \textit{rate} (MW) at which it can inject or withdraw power, but rather is limited in the total \textit{volume} of energy stored (MWh). This storage system will be assumed to be perfectly efficient (the energy withdrawn from the power system is equal to the energy that is able to be re-injected). There are constant returns to scale -- there is a fixed cost per unit of storage capacity which we express as the cost $f_S$ \$/MWh per hour. The storage capacity of the power system is defined as the difference between the maximum $S^{max}$ and minimum $S^{min}$ state of charge.

Let's suppose that the state of charge of the power system as a whole at a given point in time is given by $S$ {\red (MWh)}. At this point in time all of the other uncertainty in the power system (e.g., due to the variation in load) is realised. Let's suppose that this uncertainty is represented in the random variable $L$. Given this opening state of charge $S$, and realisation of uncertainty $L$, the closing state of charge is assumed to be given by $S^+_{SL}$ {\red (MWh)}.

{\red It follows that: if the load is $L$, the current state of charge is $S$, and the storage facility is following the rule $S^+_{SL}$, the net demand (that is the demand including the contribution from storage) is:
\begin{equation}
    L-\frac{S-S^+_{SL}}{\Delta}
\end{equation}
Here $\Delta$ (h) is the length of the dispatch interval.}

The transition from the opening state of charge $S$ to the closing state of charge $S^+$ can be represented as a Markov process. The probability of transitioning from an opening state of charge $S=s$, to a closing state of charge $S^+=t$ is given by $M_{st}$, defined as follows:
\begin{equation}
    M_{st} \equiv E_L[ I(S^+_{sL}=t) ]\label{eqn:mst}
\end{equation}

From any starting state of charge, after the power system evolves through any finite number of intervals the power system can be described as being in a probability distribution over possible future states, which we label $x(s)$. A key concept in the analysis that follows will be the concept of the \textit{stationary distribution}. A probability distribution over states is stationary if it satisfies the following condition:
\begin{equation}
    \forall t, \;\; \int M_{st} x(s) ds = x(t)\label{eqn:s}
\end{equation}

As before, we will start by adopting the perspective of the system planner. The task of the system planner is to find the socially-efficient use of, and investment in, storage in the context of an overall power system. 

\subsection{Optimal use of storage in a power system}

As before, let's assume that the dispatch cost in a power system when consuming at rate $L$ is $DC(L|\vec{K})$. 
Recall that the network `price' (also known as `System Marginal Cost') is the slope of the dispatch cost with respect to the load (equation \ref{eqn:plk}).

%

Let's suppose that the total state of charge of the storage is $S$. The task of the system operator is to find the optimal storage supply curve $S^+_{SL}$ which solves the following Bellman equation:
\begin{align}
    \forall S,\vec{K},K_S,\; V(S|\vec{K},K_S) &\equiv \min_{S^+_{SL}} E[ DC(L-(S-S^+_{SL})/\Delta|\vec{K}) + \delta V(S^+_{SL}|\vec{K},K_S)]\nonumber\\
    \text{subject to: }& \forall L, \;S^{min} \le S^+_{SL} \le S^{min}+K_S\equiv S^{max}\label{eqn:bm2}
\end{align}
{\red Here $\delta$ is the time discount rate. The expectation $E[\cdot]$ is taken over all possible realisations of the load $L$.}

It turns out that the socially-optimal dispatch of the storage system is to charge the storage system to its maximum at times when the realisation of demand is low, so that the resulting spot price (even when the storage is charging at the maximum rate) is below the average price in the next period (discounted for the time value of money). Similarly, it is optimal to discharge the storage system to its minimum level at times when the realisation of demand is high so that the spot price (even with discharging) is above the average price in the next period discounted for the time value of money.

Specifically, let's suppose that, given the state of charge $S$ and the realisation of the load $L$, the strategy of the system operator is to choose the closing state of charge $S^+_{SL}$. We can define $P_{SL}$ be the spot price in the current period when the state of charge is $S$, the realisation of the load is $L$, and the system operator follows the strategy $S^+_{SL}$:
\begin{equation}
 \forall S,L, \; P_{SL} \equiv \hat{P} \left(L-(S-S^+_{SL})/\Delta\right)   \label{eqn:psl}
\end{equation}
Here $\hat{P}(L)=\frac{\partial DC}{\partial L}$ {\red (from equation \ref{eqn:plk})} is the spot price when the storage is neither charging nor discharging. 

The price in the next period $P_{S^+_{SL}L'}$ depends both on the state of charge at the  start of the next period ($S^+_{SL}$) and on the realisation of the load in the next period $L'$. The expected price in the next period will be denoted as follows:
\begin{equation}
    \forall S,L, \; E[P^+_{SL}] \equiv E_{L'}[P_{S^+_{SL}L'}]\label{eqn:eps}
\end{equation}

As shown in theorem \ref{thm:1}, {\red for any given state of charge $S$ and realisation of load $L$),} the optimal strategy $S^+_{SL}$ depends on the relationship between the current price {\red when the load charges or discharges to the level $S^+$} and the expected price in the next period {\red (which also depends on the closing state of charge $S^+$),} discounted by the time value of money. Specifically, the optimal strategy must satisfy the following condition:
    \begin{align}
        \forall S, L, \; S^+_{SL}=\begin{cases}
            S^{max}, & \text{if } P_{SL}<\delta E[P^+_{SL}]\\
            S^{min},  & \text{if } P_{SL}>\delta E[P^+_{SL}]\\
            \text{in the range }[S^{min},S^{max}],  & \text{if } P_{SL}=\delta E[P^+_{SL}]
        \end{cases}\label{eqn:optsplus}
    \end{align}
{\red This is the storage analogy of equation \ref{eqn:pod}.}

Note that this is a recursive condition. The optimal strategy $S^+$ affects (a) the price realisation today (equation \ref{eqn:psl}) and (b) the expected price realisation tomorrow (equation \ref{eqn:eps}). This is a complex relationship. A high rate of charge in the current period will tend to raise the price, and raise the closing state of charge, which tends to lower the expected price in the next period, and vice versa.

We can, in principle, compute the optimal strategy $S^+_{SL}$ through an iterative process. Specifically, given an initial guess at $S^+_{SL}$, for each possible value of {\red $S$ and $L$}  we can compute the {\red current spot price $P_{SL}$ from equation \ref{eqn:psl} and the expected price in the next period $E[P^+_{SL}]$ from equation \ref{eqn:eps}. The new optimal strategy ${S^+}'_{SL}$ is given by equation \ref{eqn:optsplus}, which can be written in a slightly more compact form as follows:}
\begin{equation}
    \forall S, L,\; {S^+}'_{SL}=\left| S + \Delta(\hat{P}^{-1}(\delta E[P^+_{SL}]) - L) \right|_{S_{min}}^{S^{max}}
\end{equation}
Here $|\cdot|_L^H$ means the expression in brackets is bounded by the bounds $L$ and $H$. Where this process converges it converges to the optimal strategy, given by equations \ref{eqn:psl}-\ref{eqn:optsplus}.

Figure \ref{fig:3} illustrates the determination of the optimal strategy in the case where the storage capacity is small relative to the variation in load.  As we noted above the optimal closing state of charge depends on the relationship between the current spot price and the next-period expected spot price. In figure \ref{fig:3} the blue line represents the price-duration curve and the red represents the expected future spot price discounted by the time value of money (again, reflected as a price-duration curve).\footnote{{\red The horizontal axis represents the load duration. For any given load $l$, the load duration is in a one-to-one relationship with the actual load and is given by $Pr(L\ge l)$.}}

The right hand side of figure \ref{fig:3} illustrates the case where  the realisation of load $L$ is low (so the current spot price is low) -- lower, even, than the expected future spot price (taking into account that the storage will be full in the next period). In this case the optimal strategy is to charge the storage to the maximum $S^+_{SL}=S^{max}$. Similarly, when the realisation of the load is high (on the left in figure \ref{fig:3}) so that the current spot price (even taking into account the discharge of the storage) is high -- above the expected future spot price -- the optimal strategy is to discharge the storage to the minimum $S^+_{SL}=S^{min}$. The gray areas of the graph in figure \ref{fig:3} represent these extreme cases -- where the storage ends up fully charged or discharged. It is only in these regions that the storage expects to receive a non-zero cash-flow.

For intermediate values of the realisation of load, the storage charges or discharges to the point where the spot price today (in the current interval) is \textit{equal} to the expected future spot price (discounted by the time value of money) tomorrow (i.e., in the next interval). At these intermediate levels of load the storage expects to receive, on average, no revenue or incur no costs.

\begin{figure}[!ht]
    \begin{tikzpicture}
\begin{axis}[width=10 cm, height=7cm,
    xlabel={Duration},
    ylabel={Price},
    ymin=0, ymax=200,
    legend pos=north east,
    ymajorgrids=true,
    grid style=dashed,
]

\addplot[name path = A,
    color=blue]
    table[x=Load,y=PSmid,col sep=comma] {data/CaseB-Smax10.csv};
\addplot[name path= B,
    color=red]
    table[x=Load,y=EPSmid,col sep=comma] {data/CaseB-Smax10.csv};    
    \legend{{$P$},{$\delta E[P]$}}
\addplot[gray!20] fill between[of=A and B];
    
\end{axis}
\node[right, text width = 1.8 cm] (A) at (4.0,1) {Storage charges to $S^{max}$};
\node [below, text width = 2.5 cm] (B) at (3.5,5) {Storage discharges to $S^{min}$};
\draw [->] (A)--(6.7,2);
\draw [->] (B)--(1.5,3.5);

\end{tikzpicture}
    \caption{Illustration of the determination of $S^+_{SL}$ in the case where the storage capacity is 10\% of the variation of the load). Load is uniformly distributed on [0,100]. Raw price is assumed to be a linear function of load {\red $P(L)=20+1.5L$. Initial storage level in the middle of the range.}}
    \label{fig:3}
\end{figure}
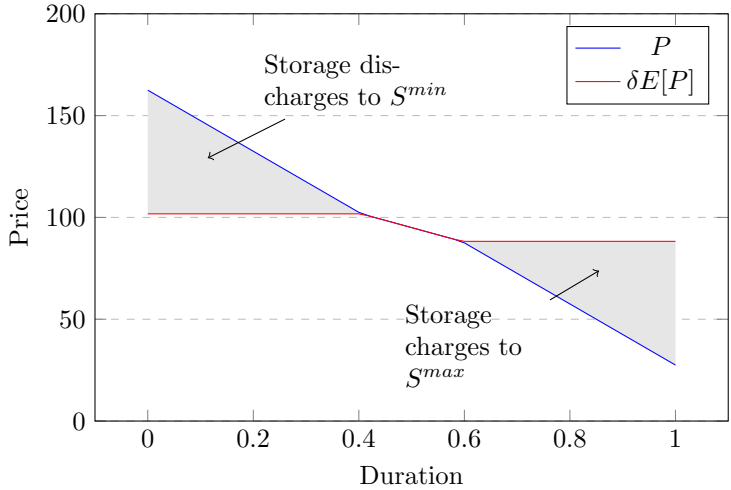

Figure \ref{fig:3} illustrates the case where the starting state of charge $S$ is in the middle of its range. But the optimal strategy {\red ($S^+_{SL}$ and the corresponding price $P_{SL}$)} 
 depends on the opening state of charge $S$ {\red (as well as the load $L$)}. Therefore, to get a full picture of the optimal strategy we should look at how the optimal strategy depends on the state of charge.

Figure \ref{fig:3a} illustrates the optimal strategy in three cases -- where the storage is initially empty $S=0$, where the storage is half full $S=50\%$ and where the storage is full $S=100\%$. {\red As we have seen, the optimal charge/discharge strategy, for any given state of charge $S$ and level of the load $L$, $S^+_{SL}$ depends on the relationship between the current price $P_{SL}$ and the expected price in the next period $\delta E[P^+_{SL}]$. The first three graphs in figure \ref{fig:3a} illustrate $P_{SL}$ and $\delta E[P^+_{SL}]$ for each value of $L$ (here represented as the `duration'). As we saw in figure \ref{fig:3}, when the load is high (which occurs with low duration), the current price (even when the storage discharges to the minimum) is above the expected price in the next period, and the storage earns a positive payoff. When the load is low (which occurs with high duration), the current price (even when the storage charges to the maximum) is below the expected price in the next period, and the storage incurs costs. Otherwise, for a range of load in between, the current and expected future prices are the same. The storage may charge or discharge but, on average expects to not receive any gain or loss.}

As can be seen {\red (in the first three graphs of figure \ref{fig:3a})} the optimal strategy does not vary materially in this case where the storage capacity is relatively small (in contrast with figure \ref{fig:4}). Figure \ref{fig:3a} also illustrates the stationary distribution which, in this case, involves the storage facility spending almost all of its time either completely full ($S=100\%$) or completely empty ($S=0\%$).\footnote{In figures \ref{fig:3a} and \ref{fig:4}, load is assumed to be uniformly distributed on [0,100]. Raw price is a linear function of load $P(L)=1.5L+20$.}

\def\dispresults#1#2#3{
\begin{figure}[!ht]
  \begin{center}
    \begin{tikzpicture}[baseline]
      \begin{axis}[
          grid=major, 
          grid style={dashed,gray!30},
          title={Storage empty},
          xlabel=Duration (\%),
          ylabel=Price (\$/MWh),
          legend pos=north east,
          ymin=0, 
          x tick label style={rotate=90,anchor=east} 
        ]
        \addplot [mark=none, name path = A, blue]
        table[x=Load,y=PSmin,col sep=comma] {#1.csv}; 
        \addplot [mark=none, name path = B, red]
        table[x=Load,y=EPSmin,col sep=comma] {#1.csv}; 
        \addplot[gray!20] fill between[of=A and B];
      \end{axis}
    \end{tikzpicture}
    \begin{tikzpicture}[baseline]
      \begin{axis}[
          grid=major,
          grid style={dashed,gray!30},
          title={Storage half full},
          xlabel=Duration (\%),
          ylabel=Price (\$/MWh),
          legend pos=north east,
          ymin=0, 
          x tick label style={rotate=90,anchor=east} 
        ]
        \addplot [mark=none, name path=C, blue]
        table[x=Load,y=PSmid,col sep=comma] {#1.csv}; 
        \addplot [mark=none, name path=D, red]
        table[x=Load,y=EPSmid,col sep=comma] {#1.csv}; 
        \addplot[gray!20] fill between[of=C and D];
      \end{axis}
    \end{tikzpicture}
    \begin{tikzpicture}[baseline]
      \begin{axis}[
          grid=major,
          grid style={dashed,gray!30},
          title={Storage full},
          xlabel=Duration (\%),
          ylabel=Price (\$/MWh),
          legend pos=north east,
          ymin=0, 
          x tick label style={rotate=90,anchor=east} 
        ]
        \addplot [mark=none, name path=E, blue]
        table[x=Load,y=PSmax,col sep=comma] {#1.csv}; 
        \addplot [mark=none, name path=F, red]
        table[x=Load,y=EPSmax,col sep=comma] {#1.csv}; 
        \addplot[gray!20] fill between[of=E and F];
      \end{axis}
    \end{tikzpicture}
    \begin{tikzpicture}[baseline]
      \begin{axis}[
          grid=major,
          grid style={dashed,gray!30},
          title={Stationary dist.},
          xlabel=State of charge (\%),
          ylabel=Probability,
          legend pos=north east,
          ymin=0,
          x tick label style={rotate=90,anchor=east} 
        ]
        \addplot [mark=none, name path=G, black]
        table[x=s,y=xs,col sep=comma] {#1-sd.csv}; 
      \end{axis}
    \end{tikzpicture}
    \caption{Storage size $S=#2\%$ of load variation.}
    \label{#3}
  \end{center}
\end{figure}}

\dispresults{data/caseB-Smax10}{10}{fig:3a}

Figure \ref{fig:4} illustrates the case where the storage capacity is large relative to the variation in demand -- in this case, where the storage capacity is 150\% of the variation in demand. As can be seen, the optimal strategy varies according to the state of charge of the storage. When the storage is close to full, if the load is low {\red (represented here as high duration)}, it is efficient to charge the storage to the maximum (reflected in the grey area on the RHS of the graph on the lower left). For intermediate levels of the state of charge, the storage is not charged to the maximum or minimum, but only to the level where the current spot price is equal to the future expected spot price (reflected in the graph on the top right). When the storage is nearly empty, when the realisation of the load is high, it is efficient to discharge the storage to the minimum (grey area of the graph on the top left). The stationary distribution in this case still has substantial weight on the extremes, but also spends more time on the intermediate values of the state of charge. This can be seen in the lower right graph of figure \ref{fig:4}.\footnote{{\red
The discontinuity in the lower right graph is an artefact of the simulation.}}


\dispresults{data/CaseB-Smax150}{150}{fig:4}

\begin{theorem}
    In a power system in which consecutive draws of $L$ are independent and identically distributed, the optimal dispatch of a non-rate-limited storage system when the state of charge is $S$ and the realisation of the load is $L$, $S^*_{SL}$ is as follows:
    \begin{align}
        \forall S,L,\; S^*_{SL}=\begin{cases}
             S^{max}, & \text{if } P_{SL}<\delta E[P^+_{SL}]\\
            S^{min},  & \text{if } P_{SL}<\delta  E[P^+_{SL}]\\
            \text{in the range } [S^{min},S^{max}], & \text{if } P_{SL}=\delta E[P^+_{SL}]
        \end{cases}\label{eqn:ses}
    \end{align}
    Here $P_{SL}$ is the `spot price' (the slope of the dispatch cost) that emerges when the storage system is used optimally, and $E[P^+_{SL}]\equiv E_{L'}[P_{S^*_{SL}L'}]$ is the expected spot price in the next period.\label{thm:1}
\end{theorem}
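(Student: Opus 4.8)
The plan is to characterize the minimizer of the Bellman equation \ref{eqn:bm2} pointwise in $L$ and then to identify the marginal value of stored energy by an envelope argument. First I would observe that, because the closing state of charge $S^+_{SL}$ may be chosen after the load $L$ is realized, the minimization in \ref{eqn:bm2} decouples across realizations of $L$: for each fixed opening state $S$ and each $L$ it suffices to minimize the single-realization objective
\[
g(S^+) \equiv DC\!\left(L-(S-S^+)/\Delta \,\middle|\, \vec{K}\right) + \delta\, V(S^+|\vec{K},K_S)
\]
over $S^+ \in [S^{min},S^{max}]$. The claimed three-way split is then just the standard constrained first-order condition for this one-dimensional problem.

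Next I would differentiate $g$. Since the net load depends on $S^+$ through $(S^+-S)/\Delta$, the dispatch-cost term contributes $\frac{1}{\Delta}\hat{P}\!\left(L-(S-S^+)/\Delta\right)=\frac{1}{\Delta}P_{SL}$, using the definition of $\hat P$ in \ref{eqn:plk} and of $P_{SL}$ in \ref{eqn:psl}; the continuation term contributes $\delta\,\partial V/\partial S$ evaluated at $S^+$. The crux is evaluating this last derivative. Applying the envelope theorem to \ref{eqn:bm2} — differentiating inside the minimization with respect to the opening state $S$ while holding the minimizing policy fixed — the continuation term $\delta V(S^+)$ drops out (it does not depend directly on $S$), and only the dispatch cost survives, giving $\frac{\partial V}{\partial S}(S)=-\frac{1}{\Delta}E_L[P_{SL}]$. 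Evaluating the same derivative at the continuation state $S^+$ yields $-\frac{1}{\Delta}E_{L'}[P_{S^+L'}]=-\frac{1}{\Delta}E[P^+_{SL}]$ by \ref{eqn:eps}. Hence $g'(S^+)=\frac{1}{\Delta}\left(P_{SL}-\delta E[P^+_{SL}]\right)$, and setting this to zero gives the interior condition $P_{SL}=\delta E[P^+_{SL}]$, the third branch of \ref{eqn:ses}. The two corners follow from the sign of $g'$: if $g'<0$ throughout the feasible interval (equivalently $P_{SL}<\delta E[P^+_{SL}]$, so charging is profitable at the margin) the constrained minimizer is $S^+=S^{max}$, while if $g'>0$ (equivalently $P_{SL}>\delta E[P^+_{SL}]$) it is $S^+=S^{min}$. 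This also corrects the evident typographical slip in the second branch of \ref{eqn:ses}, which should read $P_{SL}>\delta E[P^+_{SL}]$.

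The main obstacle is justifying that this first-order/corner test actually locates the \emph{global} minimum of $g$ and that the envelope step is legitimate; both rest on convexity. The merit-order dispatch cost $DC(\cdot|\vec{K})$ is convex in net load (its slope $\hat P$ is nondecreasing), so $g$ is convex in $S^+$ as soon as $V(\cdot|\vec K,K_S)$ is convex, and the Bellman operator in \ref{eqn:bm2} maps convex functions to convex functions, so $V$ is convex as the fixed point of that operator. Convexity simultaneously guarantees that any stationary point is the global minimizer and makes $g'$ monotone, so the sign test is unambiguous. I would flag one caveat: $\hat P$ is in general only piecewise constant, so $V$ may have kinks and $\partial V/\partial S$ must be read as a subgradient; at such points $P_{SL}$ and $\delta E[P^+_{SL}]$ should be interpreted through one-sided derivatives, and the three-way classification of \ref{eqn:ses} remains valid. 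Finally I would note the self-referential character of the result: $E[P^+_{SL}]$ is itself generated by the optimal continuation policy, so \ref{eqn:ses} is a consistency (fixed-point) condition that the optimal policy must satisfy, which is precisely what the iterative scheme described before the theorem computes.
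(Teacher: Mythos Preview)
Your proposal is correct and follows essentially the same route as the paper: decouple over $L$, take the first-order condition for $S^+$, and use the envelope theorem on \eqref{eqn:bm2} to identify $\partial V/\partial S$ with (minus) the expected next-period price, yielding the interior condition $P_{SL}=\delta E[P^+_{SL}]$ and the two corner solutions by sign. The paper dresses the same argument in Lagrangian/KKT form (multipliers $\bar{\mu}_L,\underline{\mu}_L$) rather than your direct sign test on $g'$, but the content is identical; your version is arguably more careful in that you supply the convexity argument for sufficiency and flag the subgradient interpretation at kinks of $\hat P$, points the paper's proof leaves implicit.
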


\begin{proof}
    From equation \ref{eqn:bm2} we can write the Lagrangian as follows:
    \begin{align}
        \forall S, \; \Lagr(S) =V(S)&= E[ DC(L-(S-S^+_{SL})/\Delta|\vec{K}) + \delta V(S^+_{SL})]\nonumber\\
        &-E[\bar{\mu}_L (S^{max}-S^+_{SL})]-E[\underline{\mu}_L (S^+_{SL}-S^{min})]
    \end{align}
    The first-order condition with respect to $S^+_{SL}$ is as follows:
    \begin{equation}
        \frac{f_L}{\Delta} \frac{\partial DC}{\partial L}(L-(S-S^+_{SL})/\Delta) +\delta f_L \frac{\partial V}{\partial S}(S^+_{SL})+f_L \bar{\mu}_L -f_L \underline{\mu}_L=0\label{eqn:foc3}
    \end{equation}
    Using the result that $P_{SL}=P(L-(S-S^+_{SL})/\Delta)=\frac{\partial DC}{\partial L}$, we can write this as:
    \begin{equation}
        P_{SL} +\delta \Delta \frac{\partial V}{\partial S}(S^+_{SL})+ \bar{\mu}_L- \underline{\mu}_L=0
    \end{equation}
    Using the envelope theorem we have that:
    \begin{align}
        \frac{\partial V}{\partial S}(S)&= \frac{\partial \Lagr}{\partial S} (S)= - E\left[ \frac{\partial DC}{\partial L}(L-(S-S^+_{SL})/\Delta)\right]\nonumber\\
        &=-E[P_{SL}]
    \end{align}
{\red Therefore:
\begin{equation}
    \frac{\partial V}{\partial S}(S^+_{SL})=-E_{L'}[P_{S^+_{SL},L'}]=-E[P^+_{SL}]
\end{equation}}

Here $E[P^+_{SL}]=E_{L'}[P_{S^+_{SL}L'}]$ is the expected price in the \textit{next} interval, given that the state of charge in the current interval is $S$. It follows that the first order condition, equation \ref{eqn:foc3}, can be written:
\begin{equation}
    -P_{SL}+\delta E[P^+_{SL}] + \bar{\mu}_{SL}-\underline{\mu}_{SL}=0\label{eqn:foc4}
\end{equation}
It follows that the socially-optimal storage dispatch is $S^*_{SL}$ as follows:
    \begin{align}
        \forall S,L, \; S^*_{SL}=\begin{cases}
            S^{max}, & \text{if } P_{SL}<\delta E[P^+_{SL}]\\
            S^{min},  & \text{if } P_{SL}>\delta E[P^+_{SL}]\\
            \text{in the range }[S^{min},S^{max}],  & \text{if } P_{SL}=\delta E[P^+_{SL}]
        \end{cases}\label{eqn:sstar}
    \end{align}
\end{proof}

\subsection{Optimal investment in storage}

Now let's consider the question of the optimal mix of generation and storage in an efficiently-configured power system.

We will make the assumption that the investor in generation or storage does not know the exact state of the power system at the time when the new asset comes into service. This could arise because there is some long and uncertain delay between the time when the social planner starts considering an investment and when that investment comes online.
In the absence of information about the exact state of charge, the best the social planner can do is to estimate the long-run probability that the power system will be in any given state. This is the stationary distribution we introduced earlier.

We will assume that the social planner will choose to invest if and only if the investment is socially valuable, when averaged over the stationary distribution over states. 

Recall that solving equation \ref{eqn:bm2} yields the optimal closing state of charge $S^+$ for any given opening state of charge $S$ and realisation of load $L$. It follows that the total cost of the power system in state $S$ can be written as follows:\footnote{This is the generalisation of equation \ref{eqn:vf1}.}
\begin{align}
    V(S|\vec{K},K^S)+\frac{\sum_i f_i K_i}{1-\delta} + \frac{f_S K_S}{1-\delta}
\end{align}
Here $V(S|\vec{K},K_S)$ is as given in equation \ref{eqn:bm2}:
\begin{align}
    V(S|\vec{K},K^S) &= E[ DC(L-(S-S^+_{SL})/\Delta|\vec{K}) + \delta V(S^+_{SL}|\vec{K},K^S)]\label{eqn:bm2b}
\end{align}
It follows that the optimal capacity of generation type $i$ and the optimal capacity of storage must satisfy the following equations:
\begin{equation}
    \frac{\partial V}{\partial K_i} (S|\vec{K},K_S) + \frac{f_i}{1-\delta}=0\label{eqn:ocg}
\end{equation}
And:
\begin{equation}
    \frac{\partial V}{\partial K_S} (S|\vec{K},K_S)+\frac{f_S}{1-\delta}=0\label{eqn:ocs}
\end{equation}
The value of an additional unit of capacity of generation satisfies the following recursive equation (c.f., equation \ref{eqn:soi}):
\begin{equation}
    \frac{\partial V}{\partial K_i} (S|\vec{K},K_S) =E\left[ \frac{\partial DC}{\partial K_i} (L-(S-S^+_{SL})/\Delta|\vec{K}) + \delta \frac{\partial V}{\partial K_i} (S^+_{SL}|\vec{K},K_S)\right]\label{eqn:vaug}
\end{equation}
Similarly, the value of an additional unit of capacity of storage satisfies the following recursive equation:
\begin{align}
    \frac{\partial V}{\partial K_S} (S|\vec{K},K_S) &=E\left[ \bigl( \frac{\partial DC}{\partial L} (L-(S-S^+_{SL})/\Delta|\vec{K}) + \delta \frac{\partial V}{\partial S} (S^+_{SL}|\vec{K},K_S) \bigr) \frac{\partial S^+_{SL}}{\partial K_S} \right]\nonumber\\
    &+\delta E\left[ \frac{\partial V}{\partial K_S}(S^+_{SL}|\vec{K}, K_S)\right]\label{eqn:vauc}
\end{align}
Both of these equations have the form:
\begin{equation}
    \forall S, \; f(S)=g(S)+\delta E_L[f(S^+_{SL})]\label{eqn:fgEc}
\end{equation}
{\red Here the expectation is taken over all possible values of the load $L$.} Following the  logic in theorem \ref{thm:app} in the Appendix, the solution (at the stationary distribution) is:
\begin{equation}
    E_S[f(S)]=\frac{E_S[g(S)]}{1-\delta}\label{eqn:sde}
\end{equation}
Here $E_S[\cdot]$ means the average over states, taking the probability of each state as in the stationary distribution.

Returning to equation \ref{eqn:vaug}, {\red we see that
\begin{equation}
    \forall i,S,\; f_i(S)=\frac{\partial V}{\partial K_i} (S|\vec{K},K_S)
\end{equation}
And, using equation \ref{eqn:soi}:}
\begin{equation}
    \forall i,s,\; g_i(s)=E_L\left[ \frac{\partial DC}{\partial K_i} (L-(s-S^+_{sL})/\Delta|\vec{K})\right]=-E_L[ (P_{sL}-c_i) I(P_{sL}\ge c_i)]
\end{equation}
Substituting in to equation \ref{eqn:sde} and using equation \ref{eqn:ocg}, it follows that, assuming the investment decision is based on the long-run stationary distribution over the state of charge, it is efficient to add capacity of generation type $i$ up to the point where the expected area under the price-duration curve and above the variable cost of the generation is equal to the cost of capacity for that generation type:
\begin{align}
    (1-\delta)E_S\left[\frac{\partial V}{\partial K_i} (S|\vec{K},K_S)\right]+f_i &= -E_{SL}[ (P_{SL}-c_i) I(P_{SL}\ge c_i)]+f_i=0\label{eqn:optcapg}
\end{align}
{\red This is a very similar result to equation \ref{eqn:soi}, except that the expectation is taken over all values of the storage and the realisation of the load. In effect the `price-duration curve' is augmented by the additional uncertainty in the level of storage. The effect of increasing the level of storage on the price-duration curve is illustrated in figure \ref{fig:pdcurves}.}

Similarly, {\red returning to equation \ref{eqn:vauc}, we see that:
\begin{equation}
    f(s)=\frac{\partial V}{\partial K_S}(s|\vec{K},K_S)
\end{equation}
And:}
\begin{align}
    g(s) &=E_L\left[ \bigl( \frac{\partial DC}{\partial L} (L-(s-S^+_{sL})/\Delta|\vec{K})+ \delta \frac{\partial V}{\partial S} (S^+_{sL}|\vec{K},K_S) \bigr) \frac{\partial S^+_{sL}}{\partial K_S} \right] \nonumber\\
    &= E_L\left[ (\delta E[P^+_{sL}]-P_{sL})I(\delta E[P^+_{sL}]\ge P_{sL})\right]
\end{align}


Substituting in to equation \ref{eqn:ocs} and using equation \ref{eqn:sde}, it follows that, assuming the investment decision is based on the long-run stationary distribution over the state of charge, it is efficient to add capacity of storage up to the point where the expected area under the expected price (discounted) and above the price-duration curve is equal to the cost of adding storage capacity:
\begin{align}
    (1-\delta) E_S\left[\frac{\partial V}{\partial K_i} (S|\vec{K},K_S)\right] + f_S &= E_S[ g(S)] + f_S\nonumber\\
    &=-E_{SL}\left[ (\delta E[P^+_{SL}]-P_{SL})I(\delta E[P^+_{SL}]\ge P_{SL})\right]+f_S\nonumber\\
    &=0\label{eqn:optcaps}
\end{align}

\begin{theorem}
       In a power system in which consecutive draws of the load are independent and identically distributed and storage is not rate-limited, and assuming that the social planner does not know what the state of charge will be at the time of investment, the optimal volume of storage is where the area below the expected price in the next interval (discounted for the time value of money) and above the price-duration curve, is equal to the fixed cost of storage $f_S$.\label{thm:2}
\end{theorem}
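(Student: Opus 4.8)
The plan is to follow the chain of reasoning already assembled in equations \ref{eqn:ocs}--\ref{eqn:optcaps}, organising it around three ingredients: the first-order condition that pins down the optimal capacity, a recursive (Bellman-type) equation for the marginal value $\partial V/\partial K_S$, and the stationary-distribution lemma (Theorem \ref{thm:app}, equation \ref{eqn:sde}) that collapses the recursion into a static condition. First I would record the optimality condition for storage capacity, equation \ref{eqn:ocs}, in the form $(1-\delta)\,\partial V/\partial K_S(S\mid\vec K,K_S) = -f_S$, and note that because the investment decision is taken before the state of charge is known, it is the expectation of this condition over the stationary distribution $x(\cdot)$ that must hold, i.e. $(1-\delta)E_S[\partial V/\partial K_S]=-f_S$.

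Next I would differentiate the Bellman equation \ref{eqn:bm2b} with respect to $K_S$ to obtain the recursive equation \ref{eqn:vauc}. The essential move is the envelope computation: holding $S^+_{SL}$ at its optimum, an extra unit of capacity acts through the upper bound $S^{max}=S^{min}+K_S$. Using $\partial DC/\partial L = P_{SL}$ together with the envelope identity established in Theorem \ref{thm:1}, namely $\partial V/\partial S(S^+_{SL}) = -E[P^+_{SL}]$, the bracketed ``flow'' term reduces to the arbitrage gap $P_{SL}-\delta E[P^+_{SL}]$. I would then evaluate the sensitivity $\partial S^+_{SL}/\partial K_S$: by Theorem \ref{thm:1} the closing state of charge equals $S^{max}$ exactly on the charging region $P_{SL}<\delta E[P^+_{SL}]$, where $\partial S^+_{SL}/\partial K_S=1$, whereas at the floor or at an interior price-matching point it is $0$. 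This identifies the flow term as
\begin{equation}
    g(s)=-E_L\!\left[(\delta E[P^+_{sL}]-P_{sL})\,I(\delta E[P^+_{sL}]\ge P_{sL})\right],
\end{equation}
that is, minus the expected vertical gap between the discounted next-period price and the current price-duration curve, taken over precisely the loads at which storage charges.

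With $g$ in hand, equation \ref{eqn:vauc} has the canonical form $f(S)=g(S)+\delta E_L[f(S^+_{SL})]$ of equation \ref{eqn:fgEc}, with $f(S)=\partial V/\partial K_S(S\mid\vec K,K_S)$. Invoking the Appendix result \ref{eqn:sde} at the stationary distribution gives $(1-\delta)E_S[f(S)]=E_S[g(S)]$. Substituting the first-order condition $(1-\delta)E_S[f(S)]=-f_S$ then yields $E_S[g(S)]=-f_S$, which rearranges to
\begin{equation}
    E_{SL}\!\left[(\delta E[P^+_{SL}]-P_{SL})\,I(\delta E[P^+_{SL}]\ge P_{SL})\right]=f_S,
\end{equation}
the assertion that the expected area below the discounted next-period price and above the price-duration curve equals the fixed cost of storage.

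The step I expect to be the crux is the evaluation of $\partial S^+_{SL}/\partial K_S$ and the accompanying envelope argument. The delicate point is that although $S^+_{SL}$ depends on $K_S$ in a complicated, state-dependent way, the interior price-matching region contributes nothing: there the bracketed arbitrage gap $P_{SL}-\delta E[P^+_{SL}]$ vanishes by the first-order condition of Theorem \ref{thm:1}, so only the binding upper-constraint region survives and there the coefficient is exactly one. Making this rigorous is essentially a second application of the envelope theorem, and care is needed because $g$ itself depends on the endogenous prices $P_{SL}$ and $E[P^+_{SL}]$, which shift as $K_S$ changes. I would handle this by treating the prices as evaluated at the optimal strategy and appealing to the same envelope logic used in Theorem \ref{thm:1}, so that these induced price changes drop out to first order.
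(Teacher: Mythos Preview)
Your proposal is correct and follows essentially the same route as the paper: differentiate the Bellman equation with respect to $K_S$ to obtain the recursion \eqref{eqn:vauc}, identify the flow term via $\partial DC/\partial L=P_{SL}$, $\partial V/\partial S=-E[P^+_{SL}]$, and $\partial S^+_{SL}/\partial K_S=I(P_{SL}<\delta E[P^+_{SL}])$, then collapse the recursion using the stationary-distribution identity \eqref{eqn:sde} and combine with \eqref{eqn:ocs}. If anything your sign bookkeeping on $g(s)$ and your envelope justification for why the interior price-matching region contributes nothing are a bit more explicit than the paper's own presentation.
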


    

{\red Holding constant the stock of generation assets, for any given volume of storage we can work out the marginal benefit of adding an additional small increment of storage (the area under the expected price in the next interval and above the price-duration curve). 
}
As we might expect, this decreases as the storage capacity increases, as illustrated in figure \ref{fig:stationaryc}. The optimal level of capacity is where the average marginal benefit of additional capacity is equal to the marginal cost of adding storage capacity. If we assume that the cost of adding storage capacity is \$5/MWh, {\red and holding constant the stock of generation assets,} the optimal level of storage capacity in this power system is where the storage is approximately 37\% of the variation of the load.

\begin{figure}[!ht]
    \centering
    \begin{tikzpicture}
\begin{axis}[width=10 cm, height=7cm,
    xlabel={Storage capacity (\% of variation in load)},
    ylabel={Av. Marginal benefit (\$/h)},
    xmin=0, xmax=160,
    ymin=0, ymax=20,
    xtick={0,20,40,60,80,100,120,140},
    legend pos=north east,
    ymajorgrids=true,xmajorgrids=true,
    grid style=dashed
]
\addplot[name path = A,
    color=blue, smooth]
    coordinates {
    (2,17.34)(10,12.47)(20,8.25)(50,3.22)(100,1.2)(150,0.60)
    };
\addplot[name path = B,
    color=red, smooth]
    coordinates {
    (10,5)(1000,5)
    };
\addplot[dashed] coordinates {(37,0)(37,5)};
    \legend{Marginal Benefit, Marginal Cost}
    \end{axis}
\end{tikzpicture}
    \caption{The optimal level of storage capacity is where the marginal benefit from adding storage is equal to the fixed cost of storage capacity. Here the marginal benefit is calculated holding constant the stock of generation assets.}
    \label{fig:stationaryc}
\end{figure}
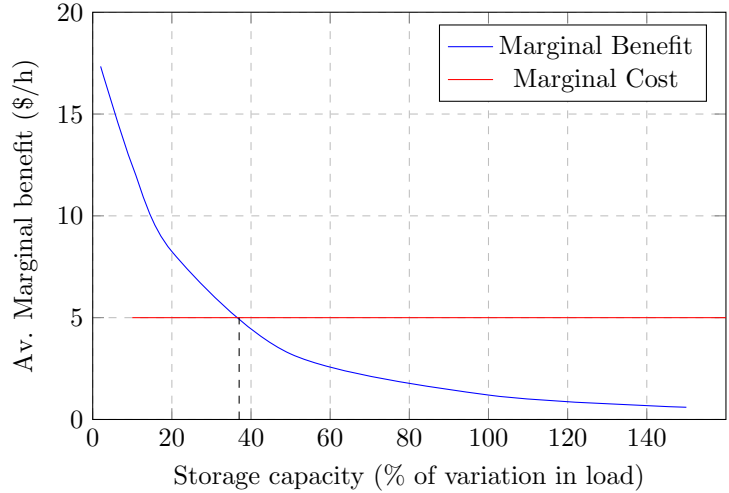

{\red In the analysis above we held constant the stock of generation assets, to find the overall optimal mix of generation and storage in a power system, we must solve equations \ref{eqn:optcapg} and \ref{eqn:optcaps} simultaneously. We can obtain some idea of the implication of changing the stock of storage on the optimal mix of generation by examining the impact of storage on the price-duration curve. It turns out that, for a given stock of generation assets, adding storage capacity `flattens' the price-duration curve around its mid-point (the average price). Figure \ref{fig:pdcurves} illustrates the price-duration curves for a range of storage capacities (storage 10\%, 50\% and 150\% of the variation in load) holding constant the underlying stock of generation assets (as reflected in $\hat{P}(L)$). As can be seen, adding storage capacity reduces the profitability of peaking generation but (since the total area under the price-duration curve remains unchanged) does not change the profitability of the baseload generation. This implies that, in the optimal mix of generation, there will be less need for peaking capacity and more need for baseload capacity.

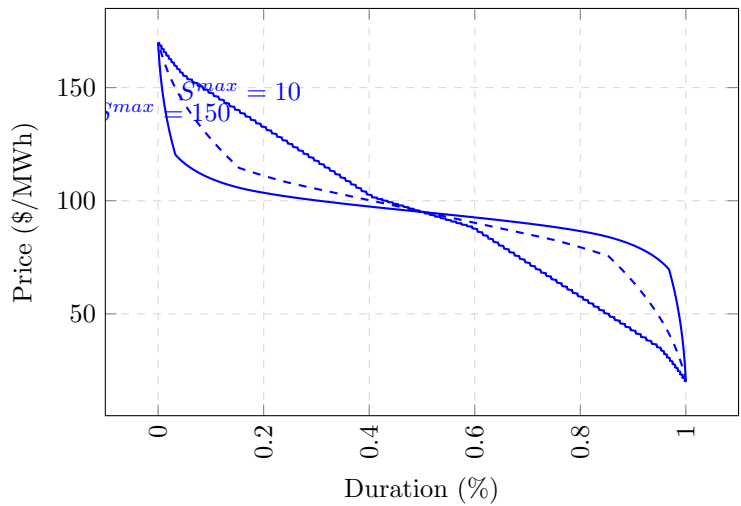
\begin{figure}[!ht]
    \centering
    \begin{tikzpicture}
\begin{axis}[width=10 cm, height=7cm,
          grid=major, 
          grid style={dashed,gray!30},
          xlabel=Duration (\%),
          ylabel=Price (\$/MWh),
          legend pos=north east,
          x tick label style={rotate=90,anchor=east} 
        ]
        \addplot [mark=none, name path = A, thick, blue]
        table[x=prob,y=price,col sep=comma] {data/CaseB-Smax10-pd.csv} 
        node [above,pos=0.8] {$S^{max}=10$};
        \addplot [mark=none, name path = A, thick, blue, dashed]
        table[x=prob,y=price,col sep=comma] {data/CaseB-Smax50-pd.csv}; 
        \addplot [mark=none, name path = A, thick, blue]
        table[x=prob,y=price,col sep=comma] {data/CaseB-Smax150-pd.csv}
        node [below,pos=0.85] {$S^{max}=150$};
    \end{axis}
\end{tikzpicture}
    \caption{Illustration of the effect of adding storage to the long-run price-duration curve. Load uniform on [0,100], Raw Price a linear function of load $P(L)=1.5L+20$. Storage 10\%, 50\% and 150\% of the variation in load.}
    \label{fig:pdcurves}
\end{figure}}

\subsection{Optimal private use of and investment in storage}\label{sec:private}

Let's now consider the decisions of a storage entrepreneur which is operating a small, price-taking storage asset. We are particularly interested to explore whether the decentralised decisions of a large number of small storage operators will result in the socially-efficient storage operation and investment decisions

We will distinguish between the state of charge of the power system as a whole and the state of charge of the small storage asset in question by using upper case $S$ to represent the state of charge of the power system as a whole and lower case $s$ to represent the state of charge of the asset in question. 

We will assume that state of charge of the power system as a whole $S$ is public information. As before, given the state of charge of the power system as a whole $S$, and the realisation of the load $L$, the state of charge of the power system as a whole is assumed to evolve according to the rule $S^+_{SL}$.

{\red Theorem \ref{thm:A} (in the Appendix) confirms that a private price-taking storage facility has an incentive to operate efficiently (that is, in accordance with the socially-optimal dispatch in theorem \ref{thm:1}). Similarly, theorem \ref{thm:B} confirms that an entrepreneur has an incentive to invest in the storage facility when it is socially-efficient to do so (in accordance with theorem \ref{thm:2}). Since the proofs of these theorems have the same structure as theorems \ref{thm:1} and \ref{thm:2}, they have been relegated to the Appendix.}

\section{Hedge contracts for storage}\label{sec:hcs}

Now let's seek an ideal hedge contract for storage.\footnote{\red As noted in footnote \ref{fn:hedge}, the case of an ideal or perfect hedge is a special case that acts as a useful benchmark or guide. In practice, if there is a cost associated with purchasing insurance, market participants are not likely to be perfectly hedged.} An ideal hedge contract is a financial contract which makes a payment $H(s^+,s, S,L)$ dependent on the {\red state of charge of the power system as a whole, $S$, the realisation of demand $L$,} and the opening and closing state of charge, $s$ and $s^+$.

As we noted in the discussion of perfect hedge contracts for generators (equations \ref{eqn:perfhc} and \ref{eqn:osch}), the perfect hedge contract must have two characteristics:
\begin{itemize}
    \item First, the hedge contract must completely eliminate the cashflow variability experienced by the storage operator when it is operating efficiently.
    
    Let $s^{+*}_{sSL}$ be the efficient strategy of {\red a small, price-taking} storage facility {\red when the state of charge of the facility is $s$ and the state of charge of all storage in the power system is $S$,} (the solution to equation \ref{eqn:bm2a}). As in equation \ref{eqn:perfhc}, {\red in order to completely eliminate the risk experienced by the facility} the perfect hedge contract must satisfy:
    \begin{equation}
        \forall s, S, L, \; \pi(s^{+*}_{sSL}|s,S,L)-H(s^{+*}_{SP},s,S,L)=\text{constant}\label{eqn:ideal1}
    \end{equation}

    \item Second, the hedge contract must not alter or distort the efficient dispatch decisions of the storage operator. In other words, the optimal dispatch for the unhedged storage facility ($s^{+*}_{sSL}$, the solution to equation \ref{eqn:bm2a}), must also be the optimal dispatch for a hedged storage facility, which is:
        \begin{align}
            V(s|S) &= \max_{s^+_{sSL}} E\left[\pi(s^{+}_{sSL}|s,S,L) -H(s^+_{sSL},s,P_{SL})+ \delta V(s^+_{sSL}|S^+_{SL})\right]\label{eqn:bm2h}\\
            & \text{subject to: }\forall s,S,L, \; s^{min} \le s^+_{sSL} \le s^{max}\nonumber
        \end{align}
\end{itemize}

Some commentators have suggested that storage systems should be hedged using a form of hedge contract known as a `collar'. A collar is a combination of a Cap and a Floor contract. A Cap contract pays out the difference between the spot price and a strike price for a pre-determined volume, but only when the spot price exceeds the strike price. Similarly, a Floor contract pays out the difference between the strike price and the spot price, but only when the spot price falls short of the strike price. Formally a cap hedge contract with a strike price of $P^S$ and a volume of $V$ makes a payment equal to $\text{Cap}(P|P^S,V)=(P-P^S)V I(P>P^S)$. Similarly, a floor hedge contract makes a payment equal to $\text{Floor}(P|P^S,V)=(P^S-P)V I(P<P^S)$.

It is straightforward to check that it is not possible to obtain a perfect hedge for a storage system using Cap and Floor contracts alone. In fact it is not possible to obtain a perfect hedge for a storage system using any hedge contract which depends only on {\red factors outside the control of the storage unit, such as the storage of the power system as a whole $S$, or the spot price $P_{SL}$. Intuitively, the reason is that the presence of the hedging eliminates the incentive on the storage facility to `invest' in charging today in order to discharge tomorrow. The hedge contract must involve financial rewards or penalties based on its own charging decisions to induce the storage facility to choose the efficient state of charge.
}

\begin{theorem}
    There is no hedge contract which {\red is independent of the storage facilities own state of charge $s^+$} which satisfies the two conditions for an ideal hedge contract above.\label{thm:4nohedge}
\end{theorem}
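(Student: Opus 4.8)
The plan is to argue by contradiction. I would suppose that some hedge payment $H$ that does not depend on the facility's own closing state of charge (so $H=H(s,S,L)$) satisfies both the risk-elimination condition \ref{eqn:ideal1} and the non-distortion condition \ref{eqn:bm2h}, and then derive an impossibility. The guiding intuition, which the argument must make precise, is that a hedge independent of $s^+$ cannot reward the act of storing energy, so once it has stripped out all cashflow risk the facility is left with no reason to charge rather than to sell down its entire stock. The one genuinely delicate point is that it is \emph{not} enough to inspect the instantaneous payoff: unlike the static generator problem (where $\partial H/\partial Q=0$ already guarantees non-distortion), here the hedge reshapes the \emph{continuation value}, and it is through that channel that it corrupts the efficient dispatch.

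First I would record the instantaneous cashflow of a small, price-taking facility that moves from $s$ to $s^+$: it injects $(s-s^+)/\Delta$ MW at the prevailing spot price, so $\pi(s^+\,|\,s,S,L)=P_{SL}(s-s^+)/\Delta$, which is strictly decreasing in $s^+$ because $P_{SL}>0$. Next I would exploit condition \ref{eqn:ideal1}: along the efficient policy $s^{+*}_{sSL}$ the hedged per-period payoff $\pi(s^{+*}_{sSL}\,|\,s,S,L)-H(s,S,L)$ equals a single constant $k$ for \emph{every} $(s,S,L)$. Since that same constant is collected in every interval regardless of the state, following the efficient policy returns a total discounted value of $\sum_{t\ge 0}\delta^t k=k/(1-\delta)$, which does not depend on the starting state. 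Invoking condition \ref{eqn:bm2h}, which asserts that $s^{+*}_{sSL}$ is optimal for the hedged Bellman equation, this state-independent number must \emph{be} the hedged value function, so $V^H(s\,|\,S)\equiv k/(1-\delta)$ is constant.

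The contradiction then falls out. Substituting the constant $V^H$ back into \ref{eqn:bm2h}, the discounted continuation term $\delta V^H(s^+\,|\,S^+_{SL})$ no longer varies with $s^+$, and $H(s,S,L)$ never did; the only $s^+$-dependent term left in the maximand is $E_L[\pi(s^+\,|\,s,S,L)]$, which by the first step is strictly decreasing in $s^+$. Hence the unique hedged-optimal action is $s^+=s^{min}$ in every state — the facility always fully discharges. But condition \ref{eqn:bm2h} requires $s^{+*}_{sSL}$ to be hedged-optimal, whereas by theorem \ref{thm:1} (equivalently theorem \ref{thm:A}) the efficient policy sets $s^{+*}_{sSL}=s^{max}$ whenever $P_{SL}<\delta E[P^+_{SL}]$, an event of positive probability in any nontrivial storage problem. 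Thus $s^{+*}_{sSL}$ cannot be identically $s^{min}$, the two conditions are incompatible, and no such hedge exists. The step I expect to carry the real weight is the middle one — turning the risk-elimination requirement into the statement that $V^H$ collapses to a constant — since it is the self-consistency between the two conditions (the hedge is calibrated to the efficient policy, yet that same policy must survive re-optimisation under the same hedge) that forces the degeneracy.
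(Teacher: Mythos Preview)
Your argument is correct and lands on the same contradiction as the paper: once the hedge strips out all per-period variability, the hedged value function collapses to a constant, the continuation term no longer rewards charging, and the only hedged-optimal action is $s^+=s^{min}$ in every state, which is incompatible with the efficient policy.

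The one noteworthy difference is the route to ``$V^H$ is constant''. The paper first solves condition~\ref{eqn:ideal1} explicitly for $H$ (obtaining $H(s,S,L)=P_{SL}(s-s^{+*}_{sSL})$ up to a constant), substitutes into the hedged Bellman equation, and then reads off from the resulting right-hand side that $V(s|S)$ does not depend on $s$ --- a step that silently relies on the structural fact that the efficient closing charge $s^{+*}_{sSL}$ does not actually depend on the opening charge $s$. You instead argue directly from the self-consistency of the two conditions: condition~\ref{eqn:ideal1} forces the hedged per-period payoff along the efficient policy to equal a single number $k$, and condition~\ref{eqn:bm2h} (efficient policy is hedged-optimal) then forces $V^H\equiv k/(1-\delta)$. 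Your route is slightly more robust, since it never needs the explicit form of $H$ or of $s^{+*}$; the paper's route is marginally more concrete, since it exhibits the offending hedge before showing it fails. Either way the endgame is identical.
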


\begin{proof}
    {\red Let's suppose that there is such a hedge contract $H(s,S,L)$.} From equation \ref{eqn:ideal1}, the hedge contract must satisfy:
    \begin{align}
        \forall s,S,L,\; \pi(s^{+*}_{sSL}|s,S,L)-H(s,S,L)&=P_{SL} (s-s^{+*}_{sSL})-H(s,S,L)\nonumber\\
        &=\text{constant}
    \end{align}
    This implies that $H(s,S,L)=P_{SL}(s-s^{+*}_{sSL})$ up to a constant. Now consider the second condition. From equation \ref{eqn:bm2h} we must have that $s^*_{sSL}$ is the solution to:
        \begin{align}
            V(s|S) &= \max_{s^+_{sSL}} E\left[(s-s^+_{sSL})P_{SL} -H(s,S,L)+ \delta V(s^+_{sSL}|S^+_{SL})\right]\nonumber\\
            &= \max_{s^+_{sSL}} E\left[(s-s^+_{sSL})P_{SL} -P_{SL}
            (s-s^{+*}_{sSL})+ \delta V(s^+_{sSL}|S^+_{SL})\right]\nonumber\\
            &= \max_{s^{+}_{sSL}} E[(s^{+*}_{sSL}-s^+_{sSL})P_{SL} + \delta V(s^{+}_{sSL}|S^+_{SL})]\label{eqn:thm4b}
        \end{align}
    Here the maximisation is subject to the constraints in equation \ref{eqn:bm2h}.
    
    We observe that $V(s|S)$ is independent of $s$. It follows that:
    \begin{equation}
        \frac{\partial V}{\partial s}(s|S)=0
    \end{equation}
    It follows that the first order condition for equation \ref{eqn:thm4b} for $s^+_{sSL}$ is $-P_{SL}$. As this is negative, the solution to equation \ref{eqn:thm4b} is to choose the smallest possible closing state of charge $s=s^{min}$ in all states. This is not optimal, violating the second condition.    
\end{proof}

Theorem \ref{thm:4nohedge} proves that it is not possible to obtain a perfect hedge of a storage system using {\red any conventional hedge contracts}. However, theorem \ref{thm:5hedge} below shows that it is possible to obtain a perfect hedge {\red with a slightly expanded range of hedge contracts}.

\begin{theorem}
    The following hedge contract satisfies the conditions for an ideal hedge contract:
    \begin{equation}
    {\red \forall s^+,s,S,L,\;} H(s^+,s,S,L) = P_{SL}(s-s^{+*}_{sSL})-\delta E[P^+_{SL}](s^+-s^{+*}_{sSL})-c
\end{equation}
Here $c$ is a constant, {\red and $s$ and $s^+$ are the initial and final state of charge of the} storage facility {\red and $s^{+*}_{sSL}$ is the optimal strategy of the storage facility}.\label{thm:5hedge}
\end{theorem}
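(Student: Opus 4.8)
The plan is to verify directly that $H$ satisfies the two defining conditions of an ideal hedge, exploiting its specific algebraic structure. The first condition (equation \ref{eqn:ideal1}) is immediate: substituting the efficient closing state $s^+ = s^{+*}_{sSL}$ into the hedged cashflow $\pi(s^{+*}_{sSL}|s,S,L) - H(s^{+*}_{sSL},s,S,L)$ and using the price-taking profit $\pi(s^+|s,S,L) = P_{SL}(s-s^+)$, the first bracketed term of $H$ cancels exactly against $\pi$ while the term proportional to $(s^+ - s^{+*}_{sSL})$ vanishes, leaving the hedged cashflow equal to the constant $c$ in every state.

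For the second condition I would substitute $H$ into the hedged Bellman equation \ref{eqn:bm2h} and simplify the hedged one-period payoff to the compact form $(s^+ - s^{+*}_{sSL})(\delta E[P^+_{SL}] - P_{SL}) + c$. The structural observation that makes this hedge succeed where those ruled out by Theorem \ref{thm:4nohedge} fail is that $E[P^+_{SL}]$ depends only on the \emph{system-wide} closing state $S^+_{SL}$, not on the facility's own choice $s^+$; under the price-taking assumption it is therefore a constant from the facility's optimization standpoint. I would then posit a constant hedged value function $V(s|S) = c/(1-\delta)$, so that $\partial V/\partial s^+ = 0$ and the first-order condition for $s^+$ in the hedged Bellman equation collapses to $\delta E[P^+_{SL}] - P_{SL}$. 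This sign condition is exactly the one characterizing the socially-optimal dispatch $s^{+*}_{sSL}$ in Theorem \ref{thm:1}, so the facility's privately optimal choice charges to $s^{max}$ when $P_{SL} < \delta E[P^+_{SL}]$, discharges to $s^{min}$ when $P_{SL} > \delta E[P^+_{SL}]$, and is indifferent on the boundary, reproducing $s^{+*}_{sSL}$.

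To complete the argument I would verify the guessed value function as a genuine fixed point of the hedged Bellman operator. Evaluating the hedged objective at $s^+ = s^{+*}_{sSL}$ gives a one-period payoff of exactly $c$, so $V(s|S) = c + \delta E[V(s^{+*}_{sSL}|S^+_{SL})] = c + \delta c/(1-\delta) = c/(1-\delta)$, confirming simultaneously that $V$ is indeed the constant claimed and that $s^{+*}_{sSL}$ attains the maximum. This establishes the second condition.

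The main obstacle I anticipate is the self-referential character of the construction: the hedge $H$ is defined in terms of the optimal policy $s^{+*}_{sSL}$, while the continuation value $V$ that shapes the facility's incentives is itself generated by following that policy under the hedge. The argument must therefore be framed as a consistency check, verifying that the constant value function is a fixed point of the hedged Bellman operator and that $s^{+*}_{sSL}$ attains the maximum, rather than as a naive optimization, taking care not to presuppose the conclusion. A secondary point requiring attention is the indifference case $P_{SL} = \delta E[P^+_{SL}]$, where the hedged objective is flat in $s^+$ so that every admissible closing state, including the socially-optimal (possibly interior) one, is optimal; this is precisely what allows $s^{+*}_{sSL}$ to remain an optimal choice there.
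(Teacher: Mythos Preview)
Your proposal is correct and follows the same route as the paper, but is considerably more careful. The paper's proof of the second condition consists of a single sentence asserting that ``the first order conditions for the hedged storage operator are the same as for the unhedged storage operator''; it does not explicitly address what happens to the continuation-value term $\delta\,\partial V/\partial s^+$ once the hedge is in place. Your fixed-point verification---guessing the constant hedged value function $V=c/(1-\delta)$, checking that the resulting first-order condition reduces to the sign of $\delta E[P^+_{SL}]-P_{SL}$, and then confirming that this constant $V$ is indeed a fixed point of the hedged Bellman operator---fills exactly that gap and makes the argument self-contained. The handling of the indifference case is also a nice touch that the paper omits.
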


\begin{proof}
With this hedge contract the hedged payoff of the storage is:
    \begin{align*}
    {\red \forall s,S,L,\;} \pi(s^+|s,S,L)-H(s^+,s,S,L)= (s^{+*}_{sSL}-s^+ )(P_{SL}-\delta E[P^+_{SL}]) + c
    \end{align*}
It is clear that when the storage chooses the efficient strategy $s^+=s^{+*}_{sSL}$, {\red the hedged payoff is constant so} the first condition is satisfied. To verify the second condition, we observe that first order conditions for the hedged storage operator (equation \ref{eqn:bm2h}) are the same as for the unhedged storage operator (equation \ref{eqn:bm2}), so the profit-maximising storage supply function $s^{+*}_{sSL}$ is unchanged.
\end{proof}

We can write this hedge contract as follows:
\begin{equation}
    H(s^+,s,S,L) = (P_{SL}-\delta E[P^+_{SL}]) (s-s^{+*}_{sSL}) + \delta E[P^+_{SL}] (s-s^+)-c\label{eqn:hedge1}
\end{equation}
 

{\red Let's define two new values: The expected spot price in the next period given that the storage is full at the start of the period, and the expected spot price in the next period given that the storage is empty at the start of the period:
\begin{align}
    EP^{max}\equiv E[P^+_{SL}|S^+_{SL}=S^{max}]\text{ and }EP^{min}\equiv E[P^+_{SL}|S^+_{SL}=S^{min}]
\end{align}
}

The perfect hedge can be expressed as the sum of a `Cap' and a `Floor' contract, with a strike price of {\red $\delta EP^{min}$ and $\delta EP^{max}$, respectively, and a new form of hedge contract, which we will refer to as an `S-shaped hedge'}, as follows:
\begin{align}
    H(s^+,s,S,L) &=(P_{SL}-\delta E[P^+_{SL}]) (s-s^{+*}_{sSL}) - \delta E[P^+_{SL}] (s^+-s)-c\nonumber\\
        &=(P_{SL}-\delta E[P^+_{SL}]) (s-s^{min})I(P_{SL}> \delta E[P^+_{SL}]) \nonumber\\
        &+(\delta E[P^+_{SL}]-P_{SL})(s^{max}-s)I(P_{SL}<\delta E[P^+_{SL}])\nonumber\\
        &- \delta E[P^+_{SL}])(s^+-s)+c\nonumber\\
        &= \text{Cap}(P_{SL}|\delta EP^{min}, s-s^{min}) + \text{Floor}(P_{SL}|\delta EP^{max},s^{max}-s)\nonumber\\
        &+(s-s^+)\delta E[P^+_{SL}]+c\label{eqn:hedge}
\end{align}
{\red The shape of these components of the perfect hedge contract are illustrated in figure \ref{fig:contract_shape}. As can be seen, the combination of the Cap and Floor take the form of a `collar' hedge contract with strike prices at $EP^{min}$ and $EP^{max}$. If the realised price $P_{SL}$ is less than $EP^{max}$, or larger than $EP^{min}$ either the Cap or the Floor has a positive payoff. Note that, although the strike price for the Cap and the Floor is independent of the state of the power system, the volume of the Cap and the Floor must be constantly adjusted to reflect the current state of charge, $s$. 


\begin{figure}
    \centering
    \begin{tikzpicture}
      \begin{axis}[width=10 cm, height=7cm,
          grid=major, 
          grid style={dashed,gray!30},
          xlabel=Duration (\%),
          ylabel=Payoff (\$/h),
          legend pos=north east,
          x tick label style={rotate=90,anchor=east} 
        ]
        \addplot [mark=none, name path = A, thick, blue]
        table[x=Load,y=Cmid,col sep=comma] {data/CaseB-Smax20.csv}; 
        \addplot [mark=none, name path = B, thick, blue]
        table[x=Load,y=Fmid,col sep=comma] {data/CaseB-Smax20.csv}; 
        \addplot [mark=none, name path = C, red]
        table[x=Load,y=Rmid,col sep=comma] {data/CaseB-Smax20.csv};
        \addplot[gray!20] fill between[of=A and B];
        \addplot [dashed,thick] coordinates {(0.32,-0) (0.32,1080)};
        \addplot [dashed,thick] coordinates {(0.68,-800) (0.68,0)};
        \legend{Cap,Floor, S-shaped}
      \end{axis}
      \node at (1.5,1.5) {$P_{SL}>\delta E[P^+_{SL}]$};
      \node at (4,1) {$P_{SL}=\delta E[P^+_{SL}]$};
      \node at (6.8,0.8) {$P_{SL}<\delta E[P^+_{SL}]$};
      \node at (2.9,-0.3) {$L^{max}_S$};
      \node at (5.7,-0.3) {$L^{min}_S$};
    \end{tikzpicture}
    \caption{Illustration of the shape of the components of the perfect hedge contract in the case where the spot price is a linear function of the load, load is uniformly distributed on [0,100], the storage capacity is 20 per cent of load, and the state of charge is in the middle of the range (i.e., 10).}
    \label{fig:contract_shape}
\end{figure}
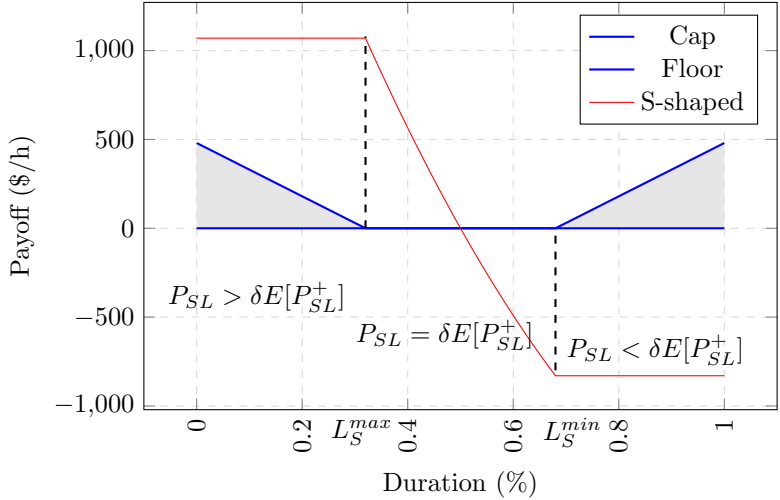

{\red The Cap and Floor contracts discussed above are reasonably standard and (although their volume varies with the state of charge of the storage system) can probably be constructed relatively easily. There will likely be more practical challenge in constructing the S-shaped hedge contract. This contract has a payoff which depends on the actual or out-turn change in the state of charge of the storage $s-s^+$ and the discounted next-period expected price $\delta E[P^+_{SL}]$. We envisage that, for any given starting value of the state of charge $S$, and given the realisation of the load $L$, the contract would specify a formula, or a look-up table for the value of $\delta E[P^+_{SL}]$.

The operation of this portfolio, for a few iterations of the power system, is illustrated in table \ref{tab:hedge}. In this example, the load is uniformly distributed on [0,100], and the storage capacity is 20. The floor contract has a strike price of \$83 and the cap has a strike price of \$107. The initial state of charge is 7.7 and the realisation of load is 32. With these values the realisation of the price (\$89.7) is equal to the discounted next-period price (\$89.7), so the storage is indifferent about charging or not. In fact, it charges to 18.7, incurring a cost of \$987.7. The payoff on the cap and floor are both zero, and the payoff on the S-shaped hedge is the expected next-period price (\$89.7) times the amount of the charge (11) which is equal to the payoff of \$987.7, so the storage is perfectly hedged.

In the third interval the state of charge is 20 and the realisation of demand is 99. The realised price (\$167) is well above the discounted expected next-period price (\$107) so the storage would like to discharge to the minimum. In doing so it earns a payoff of the spot price times the storage capacity or \$3340. The floor contract is not `in the money' but the cap contract pays out the difference between the spot price and the strike price times the storage capacity, for a payoff of \$1200. The S-shaped hedge pays out the expected price times the storage capacity, which is \$2140. Again we see that with this portfolio of hedge contracts, the storage facility is perfectly hedged.\footnote{For clarity, the payoffs in table \ref{tab:hedge} are different from the payoffs in figure \ref{fig:contract_shape} because figure \ref{fig:contract_shape} only illustrates the case where the state of charge $S=10$, whereas the state of charge in the table \ref{tab:hedge} varies each interval}}

\begin{table}[]
    \centering
    \begin{tabular}{c c c c c c c c c c c}
    \toprule
         SoC & L & P & $\delta E[P^+_{SL}]$ & $\pi$ & FV & FCF & CV & CCF & SCF & Total\\
         \midrule
         7.7 & 32 & 89.7 & 89.7 & -987.7 & 12.3 & 0 & 7.7 & 0 & -987.7 & -987.7\\
         18.7 & 29 & 87.6 & 87.6 & -112.2 & 1.3 & 0 & 18.7 & 0 & -111.2 & -111.2\\
         20.0 & 99 & 167.0 & 107.0 & 3340 & 0 & 0 & 20 & 1200 & 2140 & 3340\\ 
         \bottomrule
    \end{tabular}
    \caption{Illustration of the evolution of the payoff on the hedge contracts for three iterations of the power system, where the load is uniformly distributed on [0,100] and the storage capacity is 20. Opening state of charge is 7.7. Here L= realisation of load, P= price given SoC and load, $\pi$ is one-period payoff to storage, FV is floor volume, FCF is payoff on floor contract, CV is cap volume, CCF is payoff on cap contract, SCF is payoff on S-shaped hedge, Total is total hedge payoff (which matches storage payoff, showing the storage is perfectly hedged)}
    \label{tab:hedge}
\end{table}
}

\section{Worked example}\label{sec:examples}

Now let's consider a worked example {\red in a slightly more realistic power system}. In this example there are assumed to be three generation technologies, corresponding roughly, to baseload, mid-merit, and peaking generation. The fixed and variable cost of these generation technologies is set out in table \ref{tab:1}. The Value of Lost Load is assumed to be \$1000/MWh, which we can model as a fourth type of generation technology, with a fixed cost of zero. In addition, we will assume that storage can be added to this power system, with a fixed cost of \$25/MWh. As before the storage will be assumed to be not limited in the rate at which it can inject or withdraw power.

\begin{table}[]
    \centering
    \begin{tabular}{c|c | c}
        \toprule
         Gen & VC (\$/MWh) & FC (\$/MWh) \\
         \midrule
          L & \$50 & 185\\
          M & \$100 & 150 \\
          H & \$300 & 70 \\
          VoLL & \$1000 & \$0 \\
          \bottomrule
    \end{tabular}
    \caption{Fixed and variable costs of each generation technology in the worked example}
    \label{tab:1}
\end{table}

As can be seen in figures \ref{fig:5a}-\ref{fig:5c}, as we add more storage to the power system, the variation between the current spot price and the expected future spot price reduces (although, as the figures show, this depends on the state of charge on the storage). 


\dispresults{data/CaseA-Smax2}{2}{fig:5a}

\dispresults{data/CaseA-Smax20}{20}{fig:5b}

\dispresults{data/CaseA-Smax50}{50}{fig:5c}





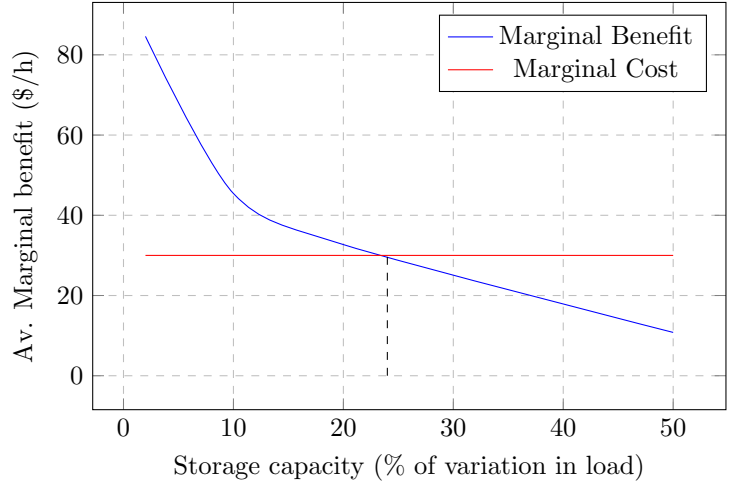
\begin{figure}[!ht]
    \centering
    \begin{tikzpicture}
\begin{axis}[width=10 cm, height=7cm,
    xlabel={Storage capacity (\% of variation in load)},
    ylabel={Av. Marginal benefit (\$/h)},
    legend pos=north east,
    ymajorgrids=true,xmajorgrids=true,
    grid style=dashed
]
\addplot[name path = A,
    color=blue, smooth]
    coordinates {
    (2,84.6)(10,45.44)(20,32.7)(50,10.8)
    };
\addplot[name path = B,
    color=red, smooth]
    coordinates {
    (2,30)(50,30)
    };
\addplot[dashed] coordinates {(24,0)(24,30)};
    \legend{Marginal Benefit, Marginal Cost}
    \end{axis}
\end{tikzpicture}
    \caption{In the worked example the optimal level of storage (holding constant the stock of generation assets) is approximately 24 per cent of the variation in the load}
    \label{fig:stationaryb}
\end{figure}

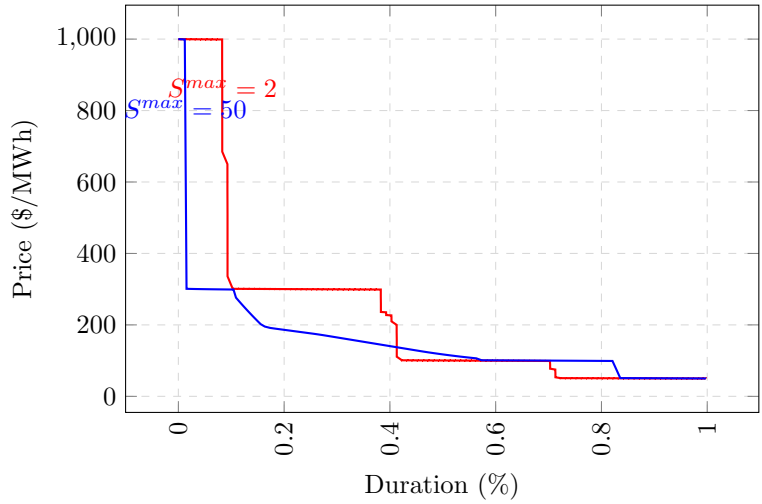
\begin{figure}[!ht]
    \centering
    \begin{tikzpicture}
\begin{axis}[width=10 cm, height=7cm,
          grid=major, 
          grid style={dashed,gray!30},
          xlabel=Duration (\%),
          ylabel=Price (\$/MWh),
          legend pos=north east,
          x tick label style={rotate=90,anchor=east} 
        ]
        \addplot [mark=none, name path = A, thick, red]
        table[x=prob,y=price,col sep=comma] {data/CaseA-Smax2-pd.csv} 
        node [above,pos=0.8] {$S^{max}=2$};
        \addplot [mark=none, name path = A, thick, blue]
        table[x=prob,y=price,col sep=comma] {data/CaseA-Smax50-pd.csv}
        node [below,pos=0.85] {$S^{max}=50$};
    \end{axis}
\end{tikzpicture}
    \caption{Illustration of the effect of adding storage on the price-duration curve in the worked example holding constant the stock of generation assets. Diagram illustrates the cases where $S^{max}=2$, and $S^{max}=50$.}
    \label{fig:pdcurves-b}
\end{figure}




\section{Conclusion}\label{sec:conclusion}

With increasing penetration of variable renewable energy it is likely that power systems of the future will require increasing volumes of energy storage services. It is therefore important to understand the economics of storage operation, investment, and hedging decisions in the context of wholesale power markets. This article contributes to that understanding by developing a theory of the optimal operation of and investment in storage systems in the context of a power system in which the realisation of load in each period is an i.i.d. random variable.

We find that many of the familiar results from the economic theory of power systems have parallels in the context of storage. But, the inclusion of storage raises new considerations and complexity in that the power system now has an additional state variable (the state of charge) which evolves over time as a Markov process.

As in the simplest power systems, it is possible the characterise the incentives for investment in storage through consideration of the price-duration curve, but now the shape and location of that price-duration curve depend on the state of charge. We also demonstrate that the private incentives for investment in a small price-taking storage facility match the social incentives. Finally we develop hedge contracts which can perfectly eliminate the risk faced by storage facilities, without distorting their operational decisions. Those hedge contracts resemble a combination of a cap and floor hedge contract, but with important differences.

{\red In this analysis we have not yet categorised the full optimal mix of storage and generation types in the power system. Doing so requires solving equations \ref{eqn:optcapg} and \ref{eqn:optcaps} simultaneously. We leave this task for future work. 

In this paper we have focused on the special case in which demand is i.i.d. In the real world demand follows a path which is partly predictable and partly uncertain from one period to the next. The impact of storage in this case is likely considerably more complex and is left for future work. In addition, this work makes important simplifying assumptions, such as that the storage is price-taking, is perfectly efficient, experiences no degradation, and is not limited in the rate at which it can charge or discharge. The impact of these assumptions can be explored in future work. 
Nevertheless,} we consider that the analysis set out here contributes to our understanding of the theory of storage in liberalised wholesale power markets.

\appendix

\section{Solutions for recursive Markov chains}

\begin{theorem}
    Suppose we have a Markov process with a state space $S$ and a transition matrix from state $s\in S$ to state $t\in S$ given by $M_{st}$. Suppose that the corresponding stationary distribution for this Markov process is given by $x_s$. By definition $x_s$ satisfies:
    \begin{equation}
        \forall t, \; x_t = \int_s x_s M_{st} ds
    \end{equation}
    Suppose that we have a recursive equation given as follows:
    \begin{equation}
        \forall s, \; f(s) = g(s) + \delta \int_t M_{st} f( t) dt \label{eqn:fsgs}
    \end{equation}
    Then we can write:
    \begin{equation}
        \int_s x_s f(s) ds = \frac{\int_s x_s g(s) ds }{1-\delta}
    \end{equation}\label{thm:app}
\end{theorem}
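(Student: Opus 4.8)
The plan is to integrate the recursive equation (\ref{eqn:fsgs}) against the stationary distribution $x_s$ and then exploit the defining property of $x_s$ to collapse the resulting double integral. First I would multiply both sides of $f(s) = g(s) + \delta \int_t M_{st} f(t)\, dt$ by $x_s$ and integrate over $s$, which gives
\begin{equation}
    \int_s x_s f(s)\, ds = \int_s x_s g(s)\, ds + \delta \int_s x_s \left( \int_t M_{st} f(t)\, dt \right) ds.
\end{equation}

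The key step is then to interchange the order of integration in the final term (a routine application of Fubini's theorem, valid under mild integrability assumptions on $f$, $g$ and $M$). This rewrites the last term as $\delta \int_t \left( \int_s x_s M_{st}\, ds \right) f(t)\, dt$. At this point I would invoke the stationarity condition $\int_s x_s M_{st}\, ds = x_t$, which reduces the bracketed inner integral to exactly $x_t$ and leaves $\delta \int_t x_t f(t)\, dt$.

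Finally, writing $F \equiv \int_s x_s f(s)\, ds$ and $G \equiv \int_s x_s g(s)\, ds$, and noting that $\int_t x_t f(t)\, dt = F$ after relabelling the dummy variable, the entire identity collapses to the scalar equation $F = G + \delta F$. Solving for $F$ yields $F = G/(1-\delta)$, which is precisely the claimed result.

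The computation itself is short; the only genuine subtlety is the interchange of integrals, so the main obstacle is ensuring that the Fubini step is legitimate (equivalently, that the discounted recursion is well-defined and the implicit geometric series converges). Since $\delta$ is a discount factor with $\delta < 1$ and the relevant state space is effectively bounded in the application, this integrability holds, and I do not expect it to present any real difficulty.
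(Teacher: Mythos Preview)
Your proposal is correct and follows exactly the same line as the paper's proof: multiply the recursion by $x_s$, integrate over $s$, swap the order of integration, apply the stationarity identity $\int_s x_s M_{st}\,ds = x_t$, and solve the scalar equation $F = G + \delta F$. The paper presents this in two lines without explicitly invoking Fubini, but the argument is identical.
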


\begin{proof}
    Taking the expectation over equation \ref{eqn:fsgs} in the stationary distribution, we find:
    \begin{align}
        \int_s x_s f(s) ds &= \int_s x_s g(s) ds + \delta \int_s x_s \int_t M_{st} f(t) dt ds\nonumber\\
        &= \int_s x_s g(s) ds + \delta \int_t x_t f(t) dt ds
    \end{align}
    The result follows.
\end{proof}

It follows that, if we have a relationship between the state of charge of storage in the current period and the state of charge of storage in the next period as given by equation \ref{eqn:fgEc} in the text:
\begin{equation}
    f(S)=g(S)+\delta E[f(S^+_{SL})]\label{eqn:fgEc1}
\end{equation}
We can take the Markov transition matrix as follows:
\begin{equation}
    M_{st}= E[I(S^+_{sL}=t)]
\end{equation}
The solution to equation \ref{eqn:fgEc1} (at the stationary distribution) is:
\begin{equation}
    E_S[f(S)]=\frac{E_S[g(S)]}{1-\delta}
\end{equation}
Here $E_S[f(S)]$ means the expectation of the function $f(s)$ in the stationary distribution.

\section{Private incentives for use of storage}

At the start of each dispatch interval the storage system in question has some existing state of charge $s$ (MWh). This small storage asset is assumed to have state of charge limits $s^{min}\le s\le s^{max}$. The state of charge at the end of the dispatch interval will be denoted $s^+$ (MWh). Each dispatch interval, given the initial state of charge $s$, the storage system observes the realisation of the spot price $P_{SL}$ and makes a decision as to the end-of-interval state of charge $s^+$. The (one-period) cashflow of the storage system over the dispatch interval is then:
\begin{equation}
\pi(s^+|s,S,L) = P_{SL} (s-s^+) 
\end{equation}

For each opening state of charge $s$, let $V(s|S)$ be the present value of the future stream of cashflows for the small storage facility in question. We can write down the Bellman equation for $V(s|S)$. This equation is the analogue of equation \ref{eqn:osc} for storage:
\begin{align}
    V(s|S) &= \max_{s^+_{sSL}} E\left[\pi(s^+_{sSL}|s,S,L) + \delta V(s^+_{sSL}|S^+_{SL})\right]\label{eqn:bm2a}\\
    & \text{subject to: }\forall s,S,L, \; s^{min} \le s^+_{sSL} \le s^{max}\nonumber
\end{align}
Here $E[\cdot]$ is the expectation operator, taken over the realisation of $L$, and $\delta$ is the rate of time discount.

We can form the Lagrangian for equation \ref{eqn:bm2a} as follows:
\begin{align}
\Lagr(s|S) &=\sum_L \lambda_L ((s-s^+_{sSL})P_{SL} + \delta V(s^+_{sSL})) \nonumber\\
&-\sum_L \lambda_L \bar{\mu}_{sSL}(s^{max}-s^+_{sSL}) - \sum_L \lambda_L \underline{\mu}_{sSL} (s^+_{sSL} - s^{min})
\label{eqn:lagr1}
\end{align}
Here $\lambda_L>0$ is the probability distribution for $L$ and $\bar{\mu}_{sSL}$ and $\underline{\mu}_{sSL}$ are Lagrange multipliers which satisfy the the complementary slackness conditions:
\begin{equation}
    \bar{\mu}_{sSL}(s^{max}-s^+_{sSL})=0, \; \bar{\mu}_{sSL}\ge 0 \text{ and } \underline{\mu}_{sSL} (s^+_{sSL} - s^{min})=0, \; \underline{\mu}_{sSL} \ge 0
\end{equation}

From equation \ref{eqn:lagr1} the first order condition for $s^+_{sSL}$ is as follows:
\begin{equation}
    \forall s,S,L, \;\; -P_{SL}+\delta \frac{\partial V}{\partial s}(s^+_{sSL}|S^+_{SL}) + \bar{\mu}_{sSL} - \underline{\mu}_{sSL}=0\label{eqn:foc1}
\end{equation}
Using the envelope theorem we can deduce that:
\begin{equation}
    \frac{\partial V}{\partial s}(s|S)=\frac{\partial \Lagr}{\partial s}=\sum_L \lambda_L \frac{\partial \pi}{\partial s}=\sum_L \lambda_L P_{SL}=E[P_{SL}]\label{eqn:ep}
\end{equation}
Substituting back into equation \ref{eqn:foc1}, the first-order condition for $S^+$ can be written:
\begin{equation}
    -P_{SL}+\delta E_{L'}[P_{S^+_{SL}{L'}}] - \bar{\mu}_{sSL}+\underline{\mu}_{sSL}=0\label{eqn:foc2}
\end{equation}
Here $L'$ is the realisation of $L$ in the subsequent period and the expectation $E[\cdot]$ is taken over all possible values of $L'$. This proves the following theorem.

\begin{theorem}\label{thm:A}
    In a power system in which successive realisations of the spot price are independent and identically distributed, and in which the power system state of charge $S$ evolves to $S^+_{SL}$ in the next period,     the profit-maximising dispatch of a price-taking, non-rate-limited storage is to charge the storage to the maximum when the realisation of the spot price is below the average price expected in the next period (discounted by time value of money) and to discharge the storage to the minimum when the realisation of the spot price is above the average price expected in the next period (discounted by the time value of money):
    \begin{equation}
        s^*_{sSL}=\begin{cases}
            s^{max}, &\text{if } P_{SL}>\delta E[P^+_{SL}]\\
            s^{min}, &\text{if } P_{SL}<\delta E[P^+_{SL}]\\
            \text{in the range } [s^{min},s^{max}], &\text{if } P_{SL}=\delta E[P^+_{SL}]
        \end{cases}\label{eqn:pms}
\end{equation}

As the condition for the profit-maximising operation of a small storage facility (equation \ref{eqn:pms} corresponds to the condition for the welfare-maximising operation for the storage of the power system as a whole (equation \ref{eqn:ses}, we can conclude that, provided the spot price $P_{SL}$ reflects the true social marginal cost of the power system, private storage entrepreneurs will make the socially-efficient decisions regarding operation of a storage system.
\end{theorem}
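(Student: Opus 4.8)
The plan is to reproduce, for the private value function, the dynamic-programming argument already used for the social planner in Theorem~\ref{thm:1}. I would start from the Bellman equation~\ref{eqn:bm2a} for $V(s|S)$, in which the one-period cashflow is $\pi(s^+|s,S,L)=P_{SL}(s-s^+)$: a price-taking facility pays the prevailing spot price when it charges ($s^+>s$) and receives it when it discharges ($s^+<s$). First I would form the Lagrangian, attaching non-negative multipliers $\bar{\mu}_{sSL}$ and $\underline{\mu}_{sSL}$ to the upper and lower constraints $s^{min}\le s^+_{sSL}\le s^{max}$, and differentiate with respect to the control $s^+_{sSL}$. Because the continuation term $\delta V(s^+_{sSL}|S^+_{SL})$ depends on the control through the next-period opening stock, this first-order condition involves the slope $\partial V/\partial s$ evaluated at $s^+_{sSL}$, so the argument cannot close without an independent expression for that slope.

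The decisive step is the envelope calculation, and it is where I expect the real work to lie. The key observation I would emphasize is that the opening stock $s$ enters the maximand in~\ref{eqn:bm2a} only through the linear revenue term $P_{SL}\,s$; the feasibility bounds and the continuation value depend on the \emph{control} $s^+$ but not on $s$ itself. Since the facility is price-taking, $P_{SL}$ is exogenous to its own choices, so $V(s|S)$ is affine in $s$ and the envelope theorem delivers the clean identity $\partial V/\partial s(s|S)=E[P_{SL}]$, independent of $s$. Evaluating this slope at the next-period opening stock and taking the expectation over the next-period load then performs the one-period forward shift, identifying $\delta\,\partial V/\partial s(s^+_{sSL}|S^+_{SL})$ with the discounted next-period expected price $\delta E[P^+_{SL}]$, where $E[P^+_{SL}]\equiv E_{L'}[P_{S^+_{SL}L'}]$.

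Substituting back reduces the first-order condition to the simple balance $-P_{SL}+\delta E[P^+_{SL}]-\bar{\mu}_{sSL}+\underline{\mu}_{sSL}=0$ (equation~\ref{eqn:foc2}), a comparison between the current price and the discounted next-period expected price. From here the bang-bang dispatch follows from complementary slackness: when $P_{SL}$ falls below $\delta E[P^+_{SL}]$ the lower-bound multiplier must vanish and the facility charges fully to $s^{max}$ (energy is worth more tomorrow); when $P_{SL}$ exceeds $\delta E[P^+_{SL}]$ the upper-bound multiplier vanishes and the facility discharges fully to $s^{min}$; and at equality both multipliers are zero and any $s^+\in[s^{min},s^{max}]$ is optimal. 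For the efficiency claim I would then simply observe that this private optimality condition is \emph{formally identical} to the social dispatch condition~\ref{eqn:ses} of Theorem~\ref{thm:1}: a price-taking entrepreneur confronted with the true marginal-cost spot price $P_{SL}$ therefore selects exactly the charge/discharge schedule the social planner would choose.

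The principal obstacle, as flagged above, is the envelope step. One must justify that $V$ is affine in the own state of charge---which hinges essentially on the price-taking assumption, since if the facility's own charging moved $P_{SL}$ the identity $\partial V/\partial s=E[P_{SL}]$ would break and the value function would acquire curvature in $s$---and then execute the forward index shift correctly, so that the slope evaluated at $s^+_{sSL}$ becomes the \emph{next} period's expected price rather than the current one. Everything else is the routine reading-off of a corner solution from the Kuhn--Tucker conditions.
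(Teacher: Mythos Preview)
Your proposal is correct and follows essentially the same route as the paper: form the Lagrangian for the Bellman equation~\ref{eqn:bm2a}, take the first-order condition in $s^+_{sSL}$ to obtain~\ref{eqn:foc1}, apply the envelope theorem to identify $\partial V/\partial s(s|S)=E[P_{SL}]$ (equation~\ref{eqn:ep}), substitute to reach~\ref{eqn:foc2}, and read off the bang-bang rule from complementary slackness. Your emphasis on the affinity of $V$ in $s$ under price-taking and on the forward index shift is, if anything, more explicit than the paper's own exposition, but the argument is the same.
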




Thereom \ref{thm:A} shows that a profit-maximising, price-taking storage facility which is not rate-limited behaves the same as a combination of a generator and a load with variable cost of production (for the generator, or a value of consumption for the load) equal to the expected price in the next period discounted for the time value of money $\delta E[P^+_S]=\delta E_{L'}[P_{S^+_{SL}L'}]$. However, unlike a conventional generator the capacity of this generator/load depends on the initial state of charge $s$. Specifically, a storage behaves like a generator with the rate of production $(s-s^{min})/\Delta$ when the spot price is high and behaves like a load with a rate of consumption $(s^{max}-s)/\Delta$ when the spot price is low .

Note that the storage facility does not charge or discharge (and make or receive the corresponding payments) \textit{every} dispatch interval. In the event that the spot price is above $\delta E[P^+_S]$ the facility discharges to $s^{min}$ and earns revenue for the injection. In subsequent periods, if the spot price is above $\delta E[P^+_S]$ there is no further opportunity for discharging and no cashflow. This may happen for several periods in a row.\footnote{Indeed there is no upper bound on the number of dispatch intervals in which the storage may neither charge nor discharge, but the probability of this happening for a large number of dispatch intervals in a row is extremely small.} Similarly, if the spot price is below $\delta E[P^+_S]$, the facility charges to $s^{max}$, incurring the cost of charging. In subsequent periods, if the spot price remains below $\delta E[P^+_S]$ there is no further charging and no cashflow.

It is worth noting that the present value of the stream of future cashflows of the storage facility varies both with its own state of charge $s$ and the system-wide state of charge $S$. The higher the value of the state of charge $s$, the higher the present value of the stream of cashflows (in effect, because the facility can discharge earning positive cashflow before it is required to charge again). From equation \ref{eqn:ep}, because $E[P]>0$, $V(s|S)$ is increasing in $s$.


\section{Private incentives for investment in storage}

Now let's consider the profit-maximising investment decision for a storage facility. As before we will assume that the capacity of the storage system must be chosen before the spot price is realised. We will assume that the investment decision is materially longer-lived than the dispatch interval.


Returning to equation \ref{eqn:bm2a}, we can write this as follows:
\begin{align}
    V(s|S,K_S) &= E\left[\pi(s^+_{sSL}|s,S,L) + \delta V(s^+_{sSL}|S^+_{SL},K_S)\right]\nonumber\\
    &=E\left[ P_{SL}(s-s^+_{sSL})+ \delta V(s^+_{sSL}|S^+_{SL},K_S)\right]\label{eqn:vf2}
\end{align}
Differentiating with respect to $K_S$ we find that:
\begin{equation}
    \frac{\partial V}{\partial K_S} (s|S,K_S)= -E[\bar{\mu}_{sSL}]+\delta E\left[\frac{\partial V}{\partial K_S}(s^+_{sSL}|S^+_{SL},K_S)\right]\label{eqn:pvf1}
\end{equation}
From equation \ref{eqn:foc2} we know that 
\begin{equation}
    \bar{\mu}_{sSL}=\begin{cases}
  \delta E[P_{S^+_{SL}L^+}]-P_{SL}, & \text{if }P_{SL} \le \delta E[P_{S^+_{SL}L^+}]\text{ and}\\
  0, & \text{otherwise}
    \end{cases}
\end{equation}
As this doesn't depend on the state of charge $s$ of the storage asset in question, we can drop the dependence on $s$ equation \ref{eqn:pvf1}:
\begin{equation}
    \frac{\partial V}{\partial K_S} (S)= -E[\bar{\mu}_{SL}]+\delta E\left[\frac{\partial V}{\partial K_S}(S^+_{SL})\right]\label{eqn:pvf2}
\end{equation}
Here $E[\bar{\mu}_{SL}]$ is the area between the expected price (discounted by the time value of money) and the price-duration curve when the initial state of the power system is $S$.

Now let's make the assumption that when the storage entrepreneur makes the decision to invest he/she does not know the exact state of the power system at the time when the investment comes on line. Instead, the best the entrepreneur can do is to estimate the long-run probability distribution over the range of states of charge. This is the stationary distribution we noted earlier, which we will denote $x(S)$.

As theorem \ref{thm:2} shows, it is privately optimal to add storage capacity as long as the area under the expected price (discounted by the time value of money) and above the price-duration curve -- averaged over all states in the stationary distribution -- is equal to the cost of adding storage capacity.

\begin{theorem}
    In a power system in which the state of the power system evolves according to a Markov process, investment in capacity is long-lived, and storage investors have no information about the overall state of the power system at the time of making the investment decision, it is privately profitable to augment a price-taking, non-rate-limited storage facility if and only if the area under the expected price (discounted by the time value of money) and above the price-duration curve (for a given state of charge for the power system as a whole),  averaged over all states in the stationary distribution -- is larger than the fixed cost of adding storage capacity.
    
    \label{thm:B}
\end{theorem}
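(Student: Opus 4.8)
The plan is to observe that the preceding discussion has already reduced the entrepreneur's investment problem to a recursion of exactly the type solved by Theorem~\ref{thm:app}; the remaining work is to average that recursion over the stationary distribution and match the result against the per-period cost of capacity. The whole argument runs in parallel to the social-planner derivation of Theorem~\ref{thm:2} (culminating in equation~\ref{eqn:optcaps}), so I would emphasise the structural correspondence rather than repeat fresh calculation.

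Concretely, I would start from equation~\ref{eqn:pvf2}, the recursive expression for the marginal value of storage capacity,
\[
    \frac{\partial V}{\partial K_S}(S) = -E[\bar{\mu}_{SL}] + \delta E\left[\frac{\partial V}{\partial K_S}(S^+_{SL})\right],
\]
and note that this is precisely the form $f(S)=g(S)+\delta E[f(S^+_{SL})]$ of equation~\ref{eqn:fgEc1}, with $f(S)=\partial V/\partial K_S(S)$ and per-period marginal benefit $g(S)=E[\bar{\mu}_{SL}]$ (the benefit enters positively for the entrepreneur's value functional, with the sign tracked against the cost-functional convention). Applying the conclusion~\ref{eqn:sde} of Theorem~\ref{thm:app} then collapses the infinite horizon into a single stationary-distribution average,
\[
    E_S\left[\frac{\partial V}{\partial K_S}(S)\right] = \frac{E_S\big[E[\bar{\mu}_{SL}]\big]}{1-\delta} = \frac{E_{SL}[\bar{\mu}_{SL}]}{1-\delta}.
\]

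Next I would invoke the two standing hypotheses: because capacity is long-lived and the entrepreneur has no information about the system state when committing capital, the correct ex-ante measure over opening states is the stationary distribution $x(S)$, so $E_S[\cdot]$ is exactly the average named in the statement. The private agent augments capacity while the marginal present value of cashflow exceeds the present value of the marginal cost stream, i.e. while $E_S[\partial V/\partial K_S(S)] > f_S/(1-\delta)$; substituting the displayed identity cancels the common factor $1/(1-\delta)$ and leaves the clean threshold $E_{SL}[\bar{\mu}_{SL}] > f_S$. Finally, using the explicit multiplier $\bar{\mu}_{SL}=(\delta E[P^+_{SL}]-P_{SL})\,I(P_{SL}\le\delta E[P^+_{SL}])$ from the first-order condition~\ref{eqn:foc2}, I would identify $E_{SL}[\bar{\mu}_{SL}]$ as the expected area lying under the discounted next-period expected price and above the price-duration curve, averaged over the stationary distribution --- the quantity in the theorem.

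The main obstacle is conceptual rather than computational: justifying that the stationary distribution, and not some other measure over states, is the right weighting in the investment criterion. This does not follow from the Bellman algebra alone, but from the two modelling assumptions (long-lived capacity and absence of state information at the investment date), which together make the long-run occupation measure the appropriate prior over the opening state of charge. A secondary point to handle carefully is the sign bookkeeping in passing from the planner's cost-multiplier to the entrepreneur's marginal benefit; here I would note that since the private first-order condition~\ref{eqn:foc2} coincides with the social condition~\ref{eqn:foc4}, the multiplier $\bar{\mu}_{SL}$, and hence the entire investment threshold, is identical to the social condition~\ref{eqn:optcaps}, confirming that decentralised storage investment is efficient.
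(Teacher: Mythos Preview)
Your proposal is correct and follows essentially the same route as the paper's own proof: both start from the recursion~\ref{eqn:pvf2}, recognise it as an instance of Theorem~\ref{thm:app}, average over the stationary distribution to obtain $E_S[\partial V/\partial K_S]=E_{SL}[\bar{\mu}_{SL}]/(1-\delta)$ (up to the sign convention you flag), and compare against the present-value cost $f_S/(1-\delta)$ to reach the threshold $E_{SL}[\bar{\mu}_{SL}]=f_S$. Your additional remarks justifying the stationary-distribution weighting and identifying the private condition with the social one~\ref{eqn:optcaps} go slightly beyond the paper's terse argument but are consistent with it.
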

The area under the (discounted) expected price and above the price-duration curve can be seen on the right hand side of figures \ref{fig:3a} and \ref{fig:4}.

\begin{proof}
Let's suppose that there is a fixed cost per unit of storage capacity equal to $f_S$ \$/hour. It follows that when the storage has a state of charge $s$, and the system-wide state of charge is $S$, the present value of the stream of future earnings is:
\begin{equation}
    V(s|S,K_S)-\frac{f_S}{1-\delta}K_S\label{eqn:smax}
\end{equation}
Here the factor $1-\delta$ is necessary to convert the constant fixed cost into a present value.

Equation \ref{eqn:pvf2} has the following form:
\begin{equation}
    f(S)=g(S)+\delta E[f(S^+_{SL})]\label{eqn:fgE}
\end{equation}
Using the logic in theorem \ref{thm:app} in the Appendix the solution in the stationary distribution is:
\begin{equation}
    E_S[f(S)]=\frac{E_S[g(S)]}{1-\delta}
\end{equation}
Here $E_S[f(S)]$ means the expectation of the function $f(s)$ in the stationary distribution.

It follows that the efficient level of investment in storage is where:
\begin{equation}
    E_S[\frac{\partial V}{\partial K_S} (S)]+\frac{f_S}{1-\delta}
    = \frac{-E_{S,L}[\bar{\mu}_{SL}]+f_S}{1-\delta}=0\label{eqn:pvf3}
\end{equation}
In other words, the efficient level of capacity is satisfies:
\begin{equation}
    E_{S,L}[\bar{\mu}_{SL}]=f_S
\end{equation}

\end{proof}

\bibliography{references}

\end{document}